\theoremstyle{plain}
\newtheorem{thm}{Theorem}[section]
\newtheorem{lemma}[thm]{Lemma}
\theoremstyle{definition}
\newtheorem{defi}[thm]{Definition}
\numberwithin{equation}{section}
\begin{document}

\title{Escaping the Brownian stalkers}
\author{Alexander Wei\ss{}}
\address{Weierstra\ss{} Institut f\"{u}r Angewandte Analysis und Stochastik\\
		 Mohrenstra\ss{}e 39\\
		 10117 Berlin\\
		 Germany
}
\email{weiss1@wias-berlin.de}
\thanks{Work supported by the DFG Research Center MATHEON}
\date{\today}
\keywords{financial markets, market stability, stochastic dynamics, recurrence, transience}
\subjclass[2000]{60J65 60K10}
\begin{abstract}
We propose a simple model for the behaviour of longterm investors on a stock market, consisting of three particles, which represent the current price of the stock and the opinion of the buyers, respectively sellers, about the right trading price. As time evolves, both groups of traders update their opinions with respect to the current price. The update speed is controled by a parameter $\gamma$, the price process is described by a geometric Brownian motion. We consider the stability of the market in terms of the distance between the buyers' and sellers' opinion, and prove that the distance process is recurrent/transient in dependence on $\gamma$.
\end{abstract}

\maketitle

\section{Introduction}
In this article we suggest a simple model for the behaviour of longterm investors on a share market. We observe the evolution of three particles. One of them represents the current price of the share, the second one the opinion of shareholders about the share's value, and the last one the opinion of potential buyers. As longterm investors do not speculate on fast returns, it is reasonable to assume two features: first, the value of the share in the eyes of their holders is much higher than the current price, and it is much lower in the eyes of potential buyers. However, both groups of investors will not wait forever. They will modify their opinions in dependence of the price development. But, as second feature, the traders will only slowly adapt to price changes. As opposed to short-time traders, who gamble on returns on short time intervals, there is no need for longterm investors to react on small fluctuations.

Eventually, as the price changes and the investors adjust their opinions, the price will reach the value, which is expected by the traders. We assume a symmetric behaviour of buyers and sellers, and need to consider, what happens if the price reaches the right value in shareholder's opinion. Because the price has reached a {\it fair} level, the investors will sell their shares. At the very moment there are new holders, namely the buyers of the shares. Eventually, the price will drop, and there is again a group of individuals not willing to follow this decrement. This means, while the individuals in the group of longterm investors will change in time, the group itself will persist. Figure \ref{fig:contSystem} shows an example for the evolution of the system on a logarithmic scale. The price is denoted by $B$, the opinion of buyers by $X$, and the one of holders by $Y$.

\begin{figure}[htb]
\begin{center}
\includegraphics[width=.75\textwidth]{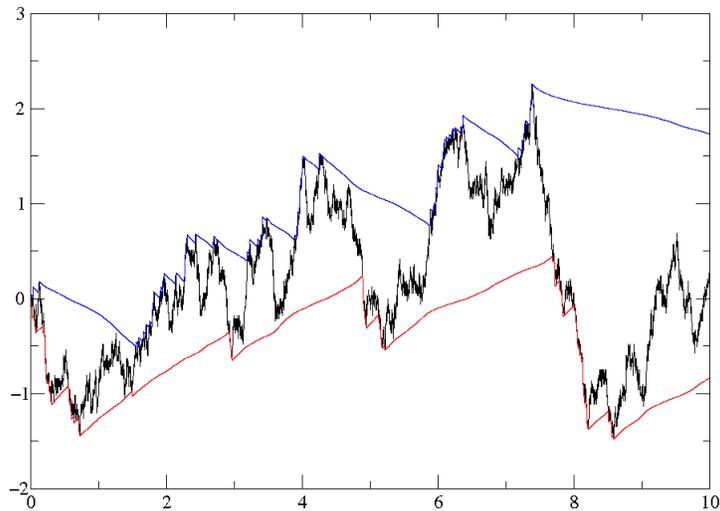}
\caption{The price $B$ (black), and the opinions $X$ (red), and $Y$ (blue) on a logarithmic scale evolving in time.}
\label{fig:contSystem}
\end{center}
\end{figure}

We will be interested into the evolution of the distance between $X$ and $Y$. In illiquid markets, i.e. in markets wherein there is only few supply, already smaller demands can only be satisfied in connection with a strong change of the price. Thus, a large group of traders willing to trade for a certain price provides some resistance against further evolution of the price into this direction. Consequently, it is of great interest how longterm investors adapt to strong price changes since they are providing resistance on levels which are normally on some distance from the price. If these investors react to slowly, the price can fluctuate between these levels without much resistance, leading to strong volatility. The theory of trading strategies on illiquid markets is a very active field of research and there are many different approaches to model these markets and their reactions on trading \cite{almgren01, huberman05, obizhaeva05}. However, the question if large orders on illiquid markets can destabilize them, seems to be open.

{\it Bovier et al.} describe in \cite{bovier06} a class of Markovian agent-based models for the evolution of a share price. Therein they present the idea of a {\it virtual order book}, which keeps track of the trader's opinions about the value of the share, irrespective of whether they have placed an order or not. For practical purposes the model is stated in a discrete time setting and in every round one agent updates his opinion. As a main feature, the probability to be chosen depends on the distance of the agent to the price. In particular, in a market with $N$ traders and current price $p$ the probability for agent $i$ with current opinion $p_i$ to be chosen is given by
\begin{equation}
\frac{h(|p_i-p|)}{\sum_{j=0}^N h(|p_j-p|)}.
\end{equation}
The function $h$ is assumed to be positive and decreasing, reflecting the idea that traders with opinions far away from the price react slower to price changes. The model is stated in a very general setting, but the authors are able to reproduce on a qualitative level several statistical properties of the price process, sometimes called {\it stylized facts}, by choosing
\begin{equation}
h(x) = \frac{1}{\left(1+x\right)^\gamma}.
\end{equation}
We pick up on this choice for our model. The logarithmic price process $B$ will be a Brownian motion, whereas the opinions of buyers, $X$, and sellers, $Y$, are described by ordinary differential equations in dependence on parameter $\gamma > 0$ and the Brownian motion $B$.

The buyers opinion at time $t$ is given by the solution of
\begin{equation}\label{eq:odeIntro}
\frac{d}{dt}f(t)=\frac{1}{\left(1+B_t-f(t)\right)^\gamma},
\end{equation}
whenever $X_t < B_t$. By the argumentation above that the individuals within the group may change, but the group of traders itself remains, $X$ can hit $B$, but it is not allowed to cross it, and thus, it describes the same movement as $B$, until $B$ goes up so fast that it cannot follow (observe that $1$ is an upper bound for the speed of $X$). This happens immediately after the two processes have met, because $B$ is fluctuating almost everywhere. As soon as the distance is positive, $X$ is driven by (\ref{eq:odeIntro}) again. Since $B$ is differentiable almost nowhere, some work is needed to give a rigorous construction of this process.

For the opinion of shareholders $Y_t$ we assume the same construction with a changed sign on the right hand side of (\ref{eq:odeIntro}). $-B$ is also a Brownian motion, and thus we can define equivalently
\begin{equation}
Y(B)=-X(-B).
\end{equation}

Notice that the speed of adaption to price fluctuations is governed by the parameter $\gamma$ in our model. Therefore we are interested in  the longterm behaviour of $Y-X$ as a function of $\gamma$. In particular, we would like to know, when $Y-X$ is recurrent, and when it is transient. A heuristic argument suggests that $\gamma = 1$ is a critical value. For a constant, $c>0$, we scale time by $c^2$ and space by $c$. We denote the scaled versions of the processes by adding superscript $c$. By Brownian scaling we have that $B^c$ is equal to $B$ in distribution. On the other hand, $X^c$ solves
\begin{equation}
\frac{d}{dt}X^c_t=\frac{c^{1-\gamma}}{\left(1/c+B^c_t-X^c_t\right)^\gamma}.
\end{equation}
If one assumes $B^c_t-X^c_t$ to be larger than $0$, the slope tends to infinity for $\gamma < 1$, and to $0$ for $\gamma > 1$ as $c$ becomes large. This observation suggests that $Y-X$ remains stable for $\gamma < 1$, only. In this paper we  show that this first guess is right, and prove a rigorous statement about the stability in dependence on $\gamma$.

The remainder of this article is organized as follows. In Section \ref{sec:construction} we define the particle system formally. $X$, or $Y$ respectively, will be constructed pathwisely as a sequence of processes. The existence of these limits is stated in Lemma \ref{lem:limExists}, its lengthy proof is given in Appendix \ref{sec:appendix}. In Section \ref{sec:thmproof} we present the main theorem and its proof, and in Section \ref{sec:conclusions} we will discuss, what our results mean for the {\it opinion game} from \cite{bovier06}.

\section{Construction}\label{sec:construction}
We introduce the processes $B$, $X$, and $Y$ formally. While $B=(B_t)_{t\in\mathbb{R}_0^+}$ is just a Brownian motion on a probability space $\{\Omega,\mathscr{F},(\mathscr{F}_t)_{t\in\mathbb{R}_0^+},P\}$, $X$ is constructed pathwisely by introducing a sequence of random step functions $B^\epsilon(\omega)$, for which the distance to $B(\omega)$ is uniformly smaller or equal than $\epsilon$. The construction of $X^\epsilon$, attracted to $B^\epsilon$ in the sense as explained in the introduction, turns out to be easy. At last, we show that $X^\epsilon$ has a limit as $\epsilon$ tends to zero, and call this limit process $X$. The construction of $Y$ follows immediately afterwards. The advantage of a step function approach is the simple transition to a discrete setting, which we will use extensively in the proof of the main theorem later on.

For any $\epsilon > 0$ we define jump times by $\bar{\sigma}^\epsilon_0 := 0$ and
\begin{equation}\label{eq:sigmaBar}
\bar{\sigma}^\epsilon_i := \min\left\{t>\bar{\sigma}^\epsilon_{i-1}:\left|B_t-B_{\bar{\sigma}^\epsilon_{i-1}}\right|\geq\epsilon\right\};\ i\in\mathbb{N},
\end{equation}
neglecting the $\epsilon$-index, provided no confusion is caused. Furthermore, we define step functions $B^\epsilon:[0,\infty)\to\mathbb{R}$ by
\begin{equation}
B^\epsilon_t := B_{\bar{\sigma}_i}\textrm{ for }t\in \left[\bar{\sigma}_i,\bar{\sigma}_{i+1}\right).
\end{equation}
Observe, by definition
\begin{equation}
\sup_{t\geq 0}\left|B_t-B^\epsilon_t\right|=\epsilon\textrm{ a.s.},
\end{equation}
and thus $B^\epsilon$ converges to $B$ on the whole $\left[0,\infty\right)$ in $\sup$-norm.
As already mentioned in the introduction, we basically want $X$ to fulfil
\begin{equation}\label{eq:ode}
\frac{d}{dt}X_t = \left(1+B_t-X_t\right)^{-\gamma},
\end{equation}
as long as $X_t < B_t$. If we substitute $B$ by a fixed number $b\geq 0$, the {\it ode} (\ref{eq:ode}) is explicitly solvable. The solution of
\begin{equation}\label{eq:odeConst}
\frac{d}{dt}f(t)=\left(1+b-f(t)\right)^{-\gamma};\ f(0)=0
\end{equation}
is
\begin{equation}\label{eq:hFuncBar}
\bar{h}(t, b) := b+1-\left(\left(b+1\right)^{\gamma+1}-\left(\gamma+1\right)t\right)^{\frac{1}{\gamma+1}}.
\end{equation}
We will call $\bar{h}(t,b)$ {\it well-defined} if
\begin{equation}
b\geq 0\textrm{ and }t\leq \frac{(b+1)^{\gamma+1}-1}{\gamma+1}.
\end{equation}
Obeserve that the bound on $t$ ensures $\bar{h}(t,b)\leq b$. As we will be mainly interested in the distance of $\bar{h}$ to $b$ at time $t$, we set
\begin{equation}\label{eq:hFunc}
h(t,b):=\left\{\begin{array}{ll}
b-\bar{h}(t,b)&\textrm{if }\bar{h}(t,b)\textrm{ is well-defined}\\
0&\textrm{else}
\end{array}\right. .
\end{equation}
This motivates to define $X^\epsilon$ in the following way: For $t\in [\bar{\sigma}_i, \bar{\sigma}_{i+1})$, $i\in\mathbb{N}_0$, we set
\begin{equation}
X^\epsilon_t :=B^{\epsilon}_{\bar{\sigma}_i}-h(t-\bar{\sigma}_i,B^{\epsilon}_{\bar{\sigma}_i}-X^\epsilon_{\bar{\sigma}_i-}),
\end{equation}
whereby $X^\epsilon_{0-} := 0$ (Figure \ref{fig:epsSystem}). This means, for $t\in [\bar{\sigma}_i, \bar{\sigma}_{i+1})$ we first consider $X^\epsilon_{\bar{\sigma}_i-}$. If $B^\epsilon_{\bar{\sigma}_i}$ is smaller than this value, we set $X^\epsilon_t := B^\epsilon_{\bar{\sigma}_i}$. Else we can apply function $\bar{h}$ to calculate the movement of $X^\epsilon$ torwards $B^\epsilon$. If $X^\epsilon$ reaches $B^\epsilon$ before time $t$, it remains on this level.

\begin{figure}[htb]
\begin{center}
\includegraphics[width=.75\textwidth]{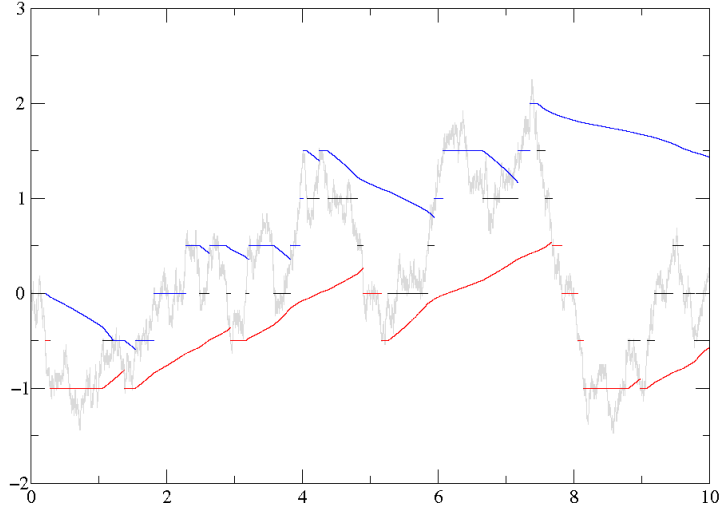}
\caption{The three processes $B^\epsilon$ (black), $X^\epsilon$ (red), and $Y^\epsilon$ (blue). $B$ is displayed beneath in grey. To make the construction clear, $\epsilon$ is chosen {\it large} in this figure ($\epsilon = 1/2$).}
\label{fig:epsSystem}
\end{center}
\end{figure}

\begin{lemma}\label{lem:limExists}
Let $S\subset [0,\infty)$ be a compact set and $\epsilon \ll \exp(-\gamma\cdot \sup S)$. Then  
\begin{equation}
\sup_{t\in S}\left|X^{\epsilon'}_t-X^\epsilon_t\right| \leq \epsilon K_S\textrm{ a.s.,}
\end{equation}
whereby $K_S$ is a finite, deterministic constant depending on $S$, and $\epsilon' < \epsilon$.
\end{lemma}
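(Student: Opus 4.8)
The plan is to fix a realisation $\omega$ and control the pathwise difference $u_t:=X^{\epsilon}_t-X^{\epsilon'}_t$ by a Gronwall-type argument. The factor $e^{\gamma\sup S}$ that will appear in $K_S$ — and hence the scale $\exp(-\gamma\sup S)$ in the hypothesis — comes from the Lipschitz constant of the drift $x\mapsto(1+b-x)^{-\gamma}$ of \eqref{eq:ode}: since the construction of $h$ (note $\bar h(t,b)\le b$ in \eqref{eq:hFuncBar}, hence $h\ge0$) forces $X^{\epsilon}_t\le B^{\epsilon}_t$ for all $t$, the argument $1+B^{\epsilon}_t-X^{\epsilon}_t$ never drops below $1$, and on that region the drift is Lipschitz in $x$ with constant exactly $\gamma$, and in $b$ with constant $\gamma(1+O(\epsilon))$ — here one uses $\sup_t|B^{\epsilon}_t-B^{\epsilon'}_t|<2\epsilon$, so that $X^{\epsilon'}_t$ exceeds $B^{\epsilon}_t$ by at most $\epsilon+\epsilon'$. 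As preliminaries I would first record the elementary facts about $h$ coming from \eqref{eq:hFuncBar}--\eqref{eq:hFunc}: $0\le h(t,b)\le b$, monotonicity in each argument, Lipschitz dependence on $b$, and translation equivariance of the construction.

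\textbf{Smooth regime.} On a subinterval of length $\ell$ during which neither $X^{\epsilon}$ nor $X^{\epsilon'}$ is clamped to its step function and the step functions do not jump, both processes genuinely solve \eqref{eq:ode} with their own drivers. Subtracting and splitting the increment of $u$ into a part governed by $|u_t|$ and one governed by $|B^{\epsilon}_t-B^{\epsilon'}_t|<2\epsilon$ gives
\begin{equation}
\frac{d}{dt}\,|u_t| \le \gamma\,|u_t| + C\epsilon
\end{equation}
with a universal $C$ (for $\epsilon$ small, $1+B^{\epsilon}_t-X^{\epsilon'}_t\ge\tfrac12$). So over such an interval $|u_t|$ grows by at most the factor $e^{\gamma\ell}$ relative to its left endpoint, plus an additive $O(\epsilon)e^{\gamma\ell}$.

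\textbf{Reflection regime.} This is where the real work lies. Between jumps $X^{\epsilon}$ is nondecreasing, so each of its downward jumps is a reflection jump: it occurs at a down-jump of $B^{\epsilon}$ at which $X^{\epsilon}$ had caught up (pre-jump gap $<\epsilon$), has size $<\epsilon$, and leaves $X^{\epsilon}$ clamped to $B^{\epsilon}$ (where $h\equiv0$) until the next jump; more generally $X^{\epsilon}$ is clamped, hence $|X^{\epsilon}_t-B_t|<\epsilon$, whenever its gap $D^{\epsilon}:=B^{\epsilon}-X^{\epsilon}$ vanishes. One must show that the reflection jumps of $X^{\epsilon}$ and of $X^{\epsilon'}$ — occurring at the generally distinct jump times of the two step functions, each of size $O(\epsilon)$ but possibly of order $\epsilon^{-2}$-many on $S$ — do not accumulate in $u$. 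The device is to work with the gap processes $D^{\epsilon}\ge0$ and $D^{\epsilon'}\ge0$ — each nonincreasing between up-jumps of its own driver, incremented by only $\epsilon$ (resp.\ $\epsilon'$) at those up-jumps, and pinned at $0$ while clamped — to note that $u_t=(B^{\epsilon}_t-B^{\epsilon'}_t)-(D^{\epsilon}_t-D^{\epsilon'}_t)$ with $|B^{\epsilon}_t-B^{\epsilon'}_t|<2\epsilon$, and to run the Gronwall comparison on $D^{\epsilon}-D^{\epsilon'}$, matching each up-jump of $B^{\epsilon}$ against the net $\epsilon$-move of $B^{\epsilon'}$ over the same time stretch, so that the numerous jumps cancel up to errors that the $e^{\gamma\sup S}$ factor absorbs. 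Together with $u_0=0$ (both processes start at $B_0$) this yields $\sup_{t\in S}|u_t|\le C'\epsilon\,e^{\gamma\sup S}=:\epsilon K_S$. Since $K_S$ depends only on $\sup S$ and $\gamma$ it is deterministic, and since every estimate used is pathwise the bound is almost sure. The hypothesis $\epsilon\ll\exp(-\gamma\sup S)$ keeps $\epsilon K_S$ below the level at which the linearised estimates are self-consistent, which lets one close the argument as a bootstrap: assume $\sup_{t\in S}|u_t|\le1$, deduce $\le\epsilon K_S$, which is strictly better.

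I expect the genuine obstacle to be precisely this control of the reflections and the associated bookkeeping over the two jump-time families $\{\bar\sigma^{\epsilon}_i\}$ and $\{\bar\sigma^{\epsilon'}_j\}$, whose cardinalities on $S$ are random and of order $\epsilon^{-2}$ and $(\epsilon')^{-2}$: one has to make quantitative the intuition that $X^{\epsilon}$ reflects only while it has caught up to within $\epsilon$ of $B$, so that at those times $u$ is already $O(\epsilon)$ and the operative growth mechanism is the mild differential inequality above, not a sum over jumps — which is why the complete proof is long and is relegated to the appendix. The remaining ingredients (the properties of $h$, translation equivariance, the plain Gronwall step) are routine; and once the Cauchy estimate is in hand, letting $\epsilon'<\epsilon\to0$ gives the uniform-on-compacts convergence of $(X^{\epsilon})$ that the lemma exists to provide.
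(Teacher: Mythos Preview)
Your proposal and the paper's proof share the same spine: a Gronwall-type control exploiting that the drift $x\mapsto(1+b-x)^{-\gamma}$ is Lipschitz with constant~$\gamma$ on the region where the argument stays $\ge1$, which produces exactly the growth factor $e^{\gamma\sup S}$ and hence the constant $K_S$. The executions differ. The paper works \emph{discretely along the coarser grid} $\{\bar\sigma^{\epsilon}_i\}$ only: it sets $\Delta_i:=X^{\epsilon}_{\bar\sigma_i}-X^{\epsilon'}_{\bar\sigma_i}$, fixes the worst case (up-jump, so $d_i=\epsilon$ and $|\Delta_i|<\epsilon$), applies a Taylor expansion of $h$ directly to obtain $\Delta_{i+1}\le\Delta_i(1+\gamma\sigma_i)$ with $\sigma_i$ effectively truncated at~$\epsilon$, and iterates to $\epsilon\,e^{\gamma t^*}$. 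Crucially it never touches the finer family $\{\bar\sigma^{\epsilon'}_j\}$: the finer process is absorbed into an ODE with shifted data via $|B^{\epsilon}-B^{\epsilon'}|<\epsilon$, so the jump-matching bookkeeping you anticipate simply does not arise. Your continuous-time Gronwall together with the decomposition $u_t=(B^{\epsilon}_t-B^{\epsilon'}_t)-(D^{\epsilon}_t-D^{\epsilon'}_t)$ is a cleaner abstraction --- it separates driver error from dynamical amplification and makes the role of the clamping transparent --- but it obliges you to control both jump families, which is the ``genuine obstacle'' you correctly flag. Both routes close to the same bound; the paper's is shorter and admittedly rough (the author says as much), while yours would be longer but more structured. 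If you want to shorten your argument, the paper's trick of committing to the $\epsilon$-grid and treating $X^{\epsilon'}$ only through a comparison ODE is worth borrowing.
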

\begin{proof}
See appendix \ref{sec:appendix}.
\end{proof}
Lemma \ref{lem:limExists} shows that $(X^\epsilon_t)_{\epsilon>0}$ is a Cauchy sequence in the set of all bounded functions from $S$ to $\mathbb{R}$, equipped with the $\sup$-norm. As this space is complete, $(X^\epsilon_t)_{\epsilon>0}$ converges. We denote the limit process by $X$. Equivalently, we define
\begin{equation}
Y^\epsilon(B^\epsilon(\omega)) := -X^\epsilon(-B^\epsilon(\omega))\textrm{ and }Y(B(\omega)) := -X(-B(\omega)).
\end{equation}

\section{The main theorem}\label{sec:thmproof}
\subsection{The theorem}
\begin{thm}\label{thm:main}
Let $B$, $X$, and $Y$ be defined as before and let
\begin{equation}
\theta_r := \sup\left\{t\geq 0: \left|Y_t-X_t\right|\leq r\right\}
\end{equation}
be the last exit time from an $r$-ball with respect to the $||\cdot||_1$-norm. Then
\begin{enumerate}
\item for $\gamma < 1$
\begin{equation}
\left(\forall r>0\right)\ \theta_r = \infty\textrm{ a.s.,}
\end{equation}
\item and for $\gamma > 1$
\begin{equation}
\left(\forall r>0\right)\ \theta_r < \infty\textrm{ a.s.}
\end{equation}
\end{enumerate}
\end{thm}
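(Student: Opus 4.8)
The plan is to work with the discretized processes $X^\epsilon$, $Y^\epsilon$ at the jump times $\bar\sigma^\epsilon_i$, and to track the distance $D^\epsilon_i := Y^\epsilon_{\bar\sigma_i} - X^\epsilon_{\bar\sigma_i}$ as a function of the random walk increments $B_{\bar\sigma_i}-B_{\bar\sigma_{i-1}} = \pm\epsilon$. The point of the $\epsilon$-construction and of Lemma \ref{lem:limExists} is precisely that statements proved uniformly in small $\epsilon$ transfer to the limit processes $X,Y$. So the first step is to set up the recursion: if $B$ moves up by $\epsilon$, then $X^\epsilon$ jumps to follow (its gap to $B^\epsilon$ is partly or wholly eaten up by the jump, then it relaxes back toward $B^\epsilon$ via $h$), while $Y^\epsilon$'s gap to $B^\epsilon$ only grows by the relaxation amount $h$; and symmetrically for a down-move. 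I would write the one-step change of $D^\epsilon$ in terms of the current gaps $G^X_i := B^\epsilon_{\bar\sigma_i}-X^\epsilon_{\bar\sigma_i} \ge 0$ and $G^Y_i := Y^\epsilon_{\bar\sigma_i}-B^\epsilon_{\bar\sigma_i}\ge 0$, noting $D^\epsilon_i = G^X_i + G^Y_i$.

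The heart of the matter is the drift of $D^\epsilon$. Over a time interval of length $\bar\sigma_{i+1}-\bar\sigma_i$, the process on the "far" side of the price (whichever of $X,Y$ is being pushed away) relaxes toward the price by an amount of order $h(\bar\sigma_{i+1}-\bar\sigma_i, G) \approx (\bar\sigma_{i+1}-\bar\sigma_i)\,(1+G)^{-\gamma}$ when $G$ is the current gap and the time increment is small; the process on the "near" side gets its gap knocked down by the $\epsilon$-jump and then also relaxes. Since the expected holding time $E[\bar\sigma_{i+1}-\bar\sigma_i \mid \mathscr F_{\bar\sigma_i}] = \epsilon^2$ (exit time of Brownian motion from an interval of radius $\epsilon$), each side is expected to decrease its gap by roughly $\epsilon^2 (1+G)^{-\gamma}$ per step from relaxation. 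Against this, the $\epsilon$-jump alternately helps one side and hurts the other; averaging over the $\pm\epsilon$ step, the jump contributions to $G^X+G^Y$ cancel to first order and one is left with a second-order correction. The upshot I expect is an effective equation for the gaps of the form $dG \approx -\,(1+G)^{-\gamma}\,dt$ for relaxation plus a martingale part, i.e. the distance $D$ behaves like a diffusion whose drift toward $0$ scales like $(1+D)^{-\gamma}$ while the fluctuation scale is order $1$. A one-dimensional process with drift $-(1+x)^{-\gamma}$ and unit diffusion coefficient is recurrent iff $\int^\infty \exp\!\big(\int_0^x 2(1+u)^{-\gamma}du\big)^{-1}dx = \infty$ against $\int^\infty\exp(\cdots)dx$; since $\int_0^x (1+u)^{-\gamma}du$ grows like $x^{1-\gamma}$ for $\gamma<1$ (so the scale function explodes and the process is recurrent, even positive recurrent) and stays bounded for $\gamma>1$ (scale function converges, hence transient), this reproduces exactly the claimed dichotomy with critical value $\gamma=1$. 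To make this rigorous without an invariance principle I would instead use Lyapunov/Foster–Lyapunov functions directly on the discrete chain $D^\epsilon_i$: for $\gamma<1$ take a test function like $V(D) = (1+D)^{2-\gamma}$ (or $(1+D)^\alpha$ for suitable $\alpha$) and show $E[V(D^\epsilon_{i+1}) - V(D^\epsilon_i)\mid \mathscr F_{\bar\sigma_i}] \le -c\,\epsilon^2 (1+D^\epsilon_i)^{-\gamma}\cdot(\text{positive factor})< 0$ outside a bounded set, giving recurrence uniformly in $\epsilon$; for $\gamma>1$ find a bounded test function $W$ with $W(D)\to$ const as $D\to\infty$, $W$ strictly smaller there than near $0$, and $E[W(D^\epsilon_{i+1})-W(D^\epsilon_i)\mid\mathscr F_{\bar\sigma_i}]\ge 0$, i.e. $W(D^\epsilon_i)$ is a bounded submartingale, which forces $D^\epsilon_i\to\infty$ and hence $\theta_r<\infty$.

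Then I would transfer conclusions to the limit. For transience ($\gamma>1$): the bounded-submartingale argument shows that for each $\epsilon$ the chain leaves every $r$-ball eventually; but I actually want a statement about $X,Y$, not $X^\epsilon,Y^\epsilon$, so I would run the Lyapunov estimate in a form uniform in $\epsilon\le\epsilon_0(r)$ and combine with Lemma \ref{lem:limExists} on compacts — more cleanly, observe that the estimates in fact give a quantitative lower bound on how fast $D^\epsilon$ grows (e.g. $D^\epsilon_i \ge c\, i\,\epsilon^2$ eventually when $\gamma>1$, since the relaxation drift becomes negligible), which passes to the limit process $D_t := Y_t - X_t$ and shows $D_t\to\infty$ a.s., hence $\theta_r<\infty$. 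For recurrence ($\gamma<1$): the Foster–Lyapunov bound gives that $D^\epsilon$ returns to a fixed bounded set infinitely often with probability one, and I would upgrade this to: $\liminf_{t\to\infty} D_t \le C$ a.s. for a deterministic $C$, uniformly in $\epsilon$; since the claim is $\theta_r=\infty$ for \emph{every} $r>0$, i.e. $\liminf_t D_t = 0$ (indeed $D_t = 0$ happens, since $D$ can hit $0$ whenever $B$ stays put long enough for both sides to catch up), I would additionally argue that from any bounded level the process reaches $0$ with probability bounded below, using that on an event of Brownian motion oscillating slowly the deterministic $h$-dynamics drives both gaps to $0$ in finite time — a Borel–Cantelli argument over disjoint time windows then gives $D_t = 0$ infinitely often, so $\theta_r=\infty$ for all $r$.

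The main obstacle I anticipate is the drift computation in the critical regime near $\gamma = 1$: one must control the one-step conditional expectation of the gap changes to \emph{second} order in $\epsilon$ (the first-order jump terms cancel, so the surviving drift is the delicate $O(\epsilon^2)$ relaxation term), and do so \emph{uniformly in $\epsilon$} and uniformly in the current gap $D$, including handling the boundary case where a gap is small and the $\epsilon$-jump overshoots it (the $h$-function is then cut off at $0$). Getting the constants to land on the right side of the recurrence/transience threshold — rather than merely showing the drift has the right sign — is where the work concentrates; the passage to the limit via Lemma \ref{lem:limExists} and the Lyapunov-function bookkeeping are comparatively routine once the drift estimate is in hand.
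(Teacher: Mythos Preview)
Your overall architecture---discretize at the jump times, do a drift/Lyapunov analysis on the resulting chain, then pass to the limit---matches the paper's. The supercritical half is close in spirit: the paper also proves transience by exhibiting a bounded subharmonic function and invoking a Meyn--Tweedie criterion. But there is a genuine gap in your reduction to a \emph{one}-dimensional chain $D^\epsilon_i = G^X_i + G^Y_i$, and it undermines both the heuristic and the proposed Lyapunov arguments.

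The point is that $(D^\epsilon_i)_i$ is not Markov, and your diffusion picture (``drift $-(1+D)^{-\gamma}$, unit diffusion'') is not what actually happens. At a jump time the increment of $B^\epsilon$ moves the pair by $\pm(\epsilon,-\epsilon)$, so away from the axes the jump leaves $D=G^X+G^Y$ \emph{unchanged}; the only contribution to $D$ there is the deterministic relaxation $-\bigl((1+G^X)^{-\gamma}+(1+G^Y)^{-\gamma}\bigr)\,\sigma_i$. Noise in $D$ enters only through the clamping $(\cdot)\vee 0$ when one gap is below $\epsilon$, i.e.\ on the axes. Hence the effective ``diffusion coefficient'' of $D$ is zero in the interior and the whole question is how often, and at what level, the two-dimensional chain visits the axes. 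That is a genuinely two-dimensional problem; a test function of $D$ alone cannot see it. Concretely, your claimed inequality $E[V(D^\epsilon_{i+1})-V(D^\epsilon_i)\mid\mathscr F_{\bar\sigma_i}]\le -c\,\epsilon^2(1+D^\epsilon_i)^{-\gamma}$ cannot hold uniformly over $\{G^X+G^Y=D\}$: on the axis the jump contributes $+\epsilon/2$ in expectation to $D$, which swamps any $O(\epsilon^2)$ relaxation term.

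The paper handles this by working on the two-dimensional chain $\Phi^\epsilon=(G^X,G^Y)$ throughout. For $\gamma>1$ it uses the bounded function $g(x,y)=1-\bigl((x+1)^{\gamma+1}+(y+1)^{\gamma+1}\bigr)^{-1/(\gamma+1)}$, which is \emph{not} a function of $x+y$; the $\|\cdot\|_{\gamma+1}$-structure is exactly what makes the second-order Taylor computation close. For $\gamma<1$ it does \emph{not} use Foster--Lyapunov at all: it runs a geometric argument on the level sets $\{x+y=4^k\}$, first showing the worst starting point is on an axis, then bounding the probability of reaching the bisector before $\{x+y=4^{k+1}\}$ by a gambler's-ruin estimate, and finally showing that once on the bisector the perpendicular random walk stays in a tube long enough (order $(4^k/\epsilon)^2$ steps) for the relaxation, of speed at least $\sqrt 2(1+2\cdot 4^k)^{-\gamma}$, to carry the particle down by order $4^{(2-\gamma)k}\gg 4^k$. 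This is precisely the two-dimensional excursion accounting your one-dimensional reduction suppresses.

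Finally, the passage to the limit is not done via Lemma~\ref{lem:limExists} (which only gives convergence on compacts and says nothing about behaviour at infinity). Instead the paper sandwiches $X,Y$ between auxiliary processes $\tilde X^\epsilon,\tilde Y^\epsilon$ (resp.\ $\hat X^\epsilon,\hat Y^\epsilon$) built from the modified ODE $\dot f=((1\pm 2\epsilon)+b-f)^{-\gamma}$, for which the same recurrence/transience proof goes through, and then uses the pointwise inequalities $Y_t-X_t\le \tilde Y^\epsilon_t-\tilde X^\epsilon_t+2\epsilon$ (resp.\ $\ge \hat Y^\epsilon_t-\hat X^\epsilon_t-2\epsilon$). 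Your proposed route ``quantitative growth rate plus Lemma~\ref{lem:limExists}'' would need an extra argument to control the tail, which the sandwich avoids.
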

The theorem confirms our guess concerning $1$ being a critical value for $\gamma$. For the critical case there is no statement at all, but as the proof of transience in the supercritical case seems to be sharp, our conjecture is null-recurrence if $\gamma = 1$.

We prove Theorem \ref{thm:main} by discretising the process $Y-X$. This results in a Markov chain, which we will examine in detail in Subsection \ref{sec:discrete}. In \ref{sec:subcrit} we prove the subcritical case by reducing it to a one-dimensional random walk problem. For the transient case ($\gamma > 1$) we basically use that a Markov chain is transient if we can find a bounded subharmonic function with respect to the generator of the chain. The particular theorem and its application in the proof can be found in Subsection \ref{sec:supercrit}.

\subsection{Discretising the problem and facts about Markov chains}\label{sec:discrete}
Let us look at the problem from another perspective. We consider the two-dimensional process $(B^\epsilon-X^\epsilon, Y^\epsilon-B^\epsilon)$, and interprete it in the following as particle moving in $[0,\infty)^2$. Observe that $Y^\epsilon-X^\epsilon$ is just the sum of both coordinates. Furthermore, because $Y^\epsilon-X^\epsilon$ can only increase at times $\bar{\sigma}_i$ and decreases afterwards, we have
\begin{equation}\label{eq:locMin}
\inf_{t\in[\bar{\sigma}_i,\bar{\sigma}_{i+1})} \left(Y^\epsilon-X^\epsilon\right)_t = \left(Y^\epsilon-X^\epsilon\right)_{\bar{\sigma}_{i+1}-}.
\end{equation}
For all $\epsilon > 0$ we define a two-dimensional Markov chain $\Phi^\epsilon = \Phi(B^\epsilon) := \left(\Phi(B^\epsilon)_i\right)_{i\in\mathbb{N}}$ with state space $[0,\infty)^2$, equipped with the Borel-$\sigma$-algebra $\mathfrak{B}([0,\infty)^2)$, by
\begin{equation}
\Phi^\epsilon_i := \left(B^\epsilon-X^\epsilon,Y^\epsilon-B^\epsilon\right)_{\bar{\sigma}_i-},
\end{equation}
whereby $\bar{\sigma}_0-=0$. The $j$-step transition probabilities from $x\in [0,\infty)^2$ to $A\subset [0,\infty)^2$ will be denoted by $P^j_x(A)$, neglecting the index for $j=1$, and the generator $L$ will be given by
\begin{equation}
Lg(x) := \int_{[0,\infty)^2} P_x(dy)g(y)-g(x)
\end{equation}
for suitable functions $g:[0,\infty)^2\to [0,\infty)$.

\label{sec:finerStruc} In the following it will be of great importance to understand, how the particle moves exactly, while $\Phi^\epsilon_i = (x,y)$ jumps to $\Phi^\epsilon_{i+1}$ (Figure \ref{fig:discSystem}). At first, a jump of size $\epsilon$ happens at time $\bar{\sigma}_i$. The position afterwards is either $(x+\epsilon,\ (y-\epsilon)\ \vee\ 0)$ or $((x-\epsilon)\ \vee\ 0,\ y+\epsilon)$ with probability $1/2$ each. Let us call this new position $(x',y')$. Before the next jump happens at time $\bar{\sigma}_{i+1}$, the particle drifts in direction of the origin. If it reaches one of the axis, it remains there, and only drifts torwards the other one, until it has reached $(0,0)$. Thus, the coordinates of $\Phi^\epsilon_{i+1}$ are given by $(h(\bar{\sigma}_{i+1}-\bar{\sigma}_i, x'),\ h(\bar{\sigma}_{i+1}-\bar{\sigma}_i, y'))$. Observe that $\Phi^\epsilon$ can only increase (in $||\cdot||_1$-sense) on the axes. 

\begin{figure}[htb]
\begin{center}
\psfrag{X}{$B^\epsilon-X^\epsilon$}
\psfrag{Y}{$Y^\epsilon-B^\epsilon$}
\includegraphics[width=.5\textwidth]{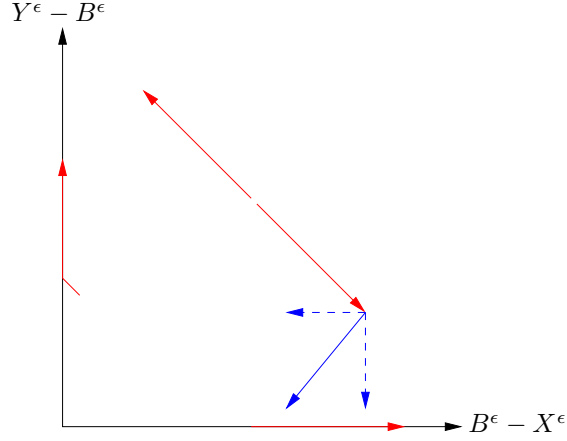}
\caption{The particle jumps (red arrows) parallel to the level lines of the $||\cdot ||_1$-norm. In the sense of this norm it can only increase on the axes. The drift consists of two independent drift components (blue dashed arrows), orthogonal to the axes. The resulting drift is illustratetd by the solid blue arrow.}
\label{fig:discSystem}
\end{center}
\end{figure}

Next, we need to understand the distribution of $\bar{\sigma}_{i+1}-\bar{\sigma}_i$. Thus, we set
\begin{equation}\label{eq:sigma}
\sigma_i := \bar{\sigma}_{i+1}-\bar{\sigma}_{i} \stackrel{d}{=} \inf\left\{t>0:B_t = \epsilon\right\}.
\end{equation}
As already suggested in the equation above, all $\sigma_i$ are {\it i.i.d} with support on $(0,\infty)$ and $\mathbb{E}\sigma = \epsilon^2$. The distribution is not known explicitly, but it can be expressed as a series with alternating summands with decreasing absolute values (refer to section C.2 in \cite{berglund06}). Calculating the first two summands results in
\begin{eqnarray}
&&\frac{4}{\pi}e^{-\pi^2/(8\epsilon)}\left(1-\frac{1}{3}e^{-\pi^2/\epsilon}\right)\\
&\leq&P\left(\sigma>\epsilon\right)\ =\ P\left(\sup_{0\leq s\leq \epsilon} \left|B_s\right|<\epsilon\right)\\
&\leq&\frac{4}{\pi}e^{-\pi^2/(8\epsilon)}.
\end{eqnarray}
For our purposes it will be sufficient to know that both bounds are of order $\exp(-1/\epsilon)$.

As we are operating on a continuous state space, the question for irreducibilty is a question for reaching sets instead of single states. Formally, $\Phi^\epsilon$ is called {\it $\varphi$-irreducible} if there exist a measure $\varphi$ on $\mathfrak{B}([0,\infty)^2)$ s.th.
\begin{equation}
\varphi(A)>0\Rightarrow P_x\left(\Phi^\epsilon\textrm{ ever reaches }A\right) > 0\textrm{ for all }x\in [0,\infty)^2.
\end{equation}
In our case
\begin{equation}
P_x\left(\left\{{\bf 0}\right\}\right) > 0 \textrm{ for all }x\in [0,\infty)^2,
\end{equation}
because the support of $\sigma$ is unbounded. Thus $\Phi^\epsilon$ is $\delta_{\bf 0}$-irreducible. The existence of an irreducibility measure ensures that there is also a {\it maximal irreducibility measure} $\Psi$ (compare with \cite{meyn96}, Prop. 4.2.2) on $\mathfrak{B}([0,\infty)^2)$ with the properties:
\begin{enumerate}
\item $\Psi$ is a probability measure.
\item $\Phi^\epsilon$ is $\Psi$-irreducible.
\item $\Phi^\epsilon$ is $\varphi'$-irreducible iff $\Psi \succ\varphi'$ (i.e. $\Psi(A)=0 \Rightarrow \varphi'(A)=0$).
\item $\Psi(A) = 0\ \Rightarrow\ \Psi\left(\left\{x:P_x(\Phi^\epsilon\textrm{ ever enters }A)\right\}\right)=0$.
\item In our case $\Psi$ is equivalent to
\begin{equation}
\Psi'(A) = \sum_{j=0}^\infty P^j_0(A)2^{-j}.
\end{equation}
\end{enumerate}
We denote the set of measurable, $\Psi$-irreducible sets by
\begin{equation}
\mathfrak{B}^+([0,\infty)^2):= \{A\in\mathfrak{B}([0,\infty)^2):\Psi(A)>0\}.
\end{equation}

Because the density of $\bar{\sigma}_{i+1}-\bar{\sigma}_i$ has support on $(0,\infty)$, it is not hard to see that
\begin{equation}
\mu(A):=Leb(A)+\delta_{\bf 0}(A)\neq 0\ \Rightarrow\ \Psi(A)\neq 0,
\end{equation} 
and therefore $\Psi \succ \mu$, whereby $Leb$ denotes the Lebesgue measure. 

Since $\Phi^\epsilon$ is a Markov chain on the {\it (possible) local minima} of $Y^\epsilon-X^\epsilon$ in the sense of (\ref{eq:locMin}), it is obvious that transience of $\Phi^\epsilon$ implies transience of $Y^\epsilon-X^\epsilon$. On the other hand $||\Phi^\epsilon||_1$ can only increase by at most $\epsilon$ in every step. Thus,
\begin{equation}
\sup_{t\in[\bar{\sigma}_i,\bar{\sigma}_{i+1})} \left(Y^\epsilon-X^\epsilon\right)_t \leq \left|\left|\Phi^\epsilon_i\right|\right|_1 +\epsilon
\end{equation}
and recurrence of $\Phi^\epsilon$ also implies recurrence of $Y^\epsilon-X^\epsilon$. However, observe that the proof of recurrence/transience for $Y^\epsilon-X^\epsilon$, $\epsilon > 0$, would not directly imply recurrence/transience for $Y-X$ in general, because we only have convergence on compact sets. Thus, we will shortly argue in the end of both parts of the proof, why the desired result follows in our case.

\subsection{Proof of the subcritical case: $\gamma < 1$}\label{sec:subcrit} 
For the subcritical case we reduce the movement of $\Phi^\epsilon$ to a nearest neighbour random walk on the level sets
\begin{equation}
M(k) := \left\{\left(x,y\right)\in[0,\infty)^2\left|\ x+y = 4^k\right.\right\},\ k\in\mathbb{Z},
\end{equation}
of $||\cdot ||_1:[0,\infty)^2\to[0,\infty)$, and show that the probability to jump to $M(k-1)$ is larger than $1/2+\delta$, $\delta > 0$, for small $\epsilon$ and all $k \geq k^*$ for a $k^*\in \mathbb{Z}$. Then it is well-known that $||\Phi^\epsilon||_1<4^{k^*}$ infinitely often. Recurrence for $\Phi^\epsilon$ follows by irreducibility.

In particular, we introduce for $k\in\mathbb{Z}$
\begin{equation}
M^-(k) := \left\{\left(x,y\right)\in[0,\infty)^2\left|\ x+y\leq 4^{k-1}\right.\right\},
\end{equation}
\begin{equation}
M^+(k) := \left\{\left(x,y\right)\in[0,\infty)^2\left|\ x+y\geq 4^{k+1}\right.\right\},
\end{equation}
and the hitting time of $\Phi^\epsilon$ for a set $M\subseteq [0,\infty)^2$
\begin{equation}
\tau^\epsilon_M := \min \left\{i:\Phi^\epsilon_i\in M\right\},
\end{equation}
neglecting the $\epsilon$ whenver possible. Then we have to show
\begin{equation}
\left(\exists k^*\right)\left(\forall k\geq k^*\right)\ \lim_{\epsilon\to 0}\inf_{m\in M(k)} P_m\left(\tau_{M^-(k)}^\epsilon < \tau_{M^+(k)}^\epsilon\right)> 1/2+\delta,\ \delta>0.
\end{equation}

The proof works in four steps (Figure \ref{fig:proofIrred}).
\begin{enumerate}
\item We show that $P_m(\tau_{M^-(k)} < \tau_{M^+(k)})$ is minimized for $m^*\in\{(4^k,0),(0,4^k)\}$. As the model is symmetric, {\it w.l.o.g.}, we may assume $m^* = (0,4^k)$. 
\item We show
\begin{equation}
P_{m^*}\left(\tau_{\{(x,y):x=y\}} < \tau_{M^+(k)}\right) > 1-e^{-6/7} \approx 0.576
\end{equation}
as $\epsilon$ tends to $0$.
\item We assume, the particle has been successful in the last step, and has reached $(x,x)\notin M^+(k)$. Then, in the worst case, it is at position $(2\cdot4^k,2\cdot4^k)$ now, respectively arbitrarily close to it (as $\epsilon$ becomes small). As the direction of the jumps and the drift times $\sigma_i$ are mutually indpendent, we can treat jump and drift phases independently. We will use this knowledge to determine the diameter of a tube around the bisector. As long as the particle is located within this area, it will not drift to the axis {\it too fast}. When we know the diameter, we can calculate the probability that the jumps do not take the particle out of the tube within a certain time period. Knowing this time and the speed torwards the origin, we can calculate, how close it gets to the origin before hitting the axis.
\item Finally, we combine steps $2$ and $3$. It will turn out that the probability to stay in the tube for a certain time (step $3$) can be chosen large enough such that it is still strictly larger than $1/2$ if multiplied with the probability to reach the bisector (step $2$). On the other hand, the time $\Phi^\epsilon$ stays in the tube, will be sufficient to reach $M^-(k)$.
\end{enumerate} 

\begin{figure}[htb]
\begin{center}
\psfrag{X}{$B^\epsilon-X^\epsilon$}
\psfrag{Y}{$Y^\epsilon-B^\epsilon$}
\psfrag{k}{$M(k)$}
\psfrag{-}{$M^-(k)$}
\psfrag{+}{$M^+(k)$}
\includegraphics[width=.7\textwidth]{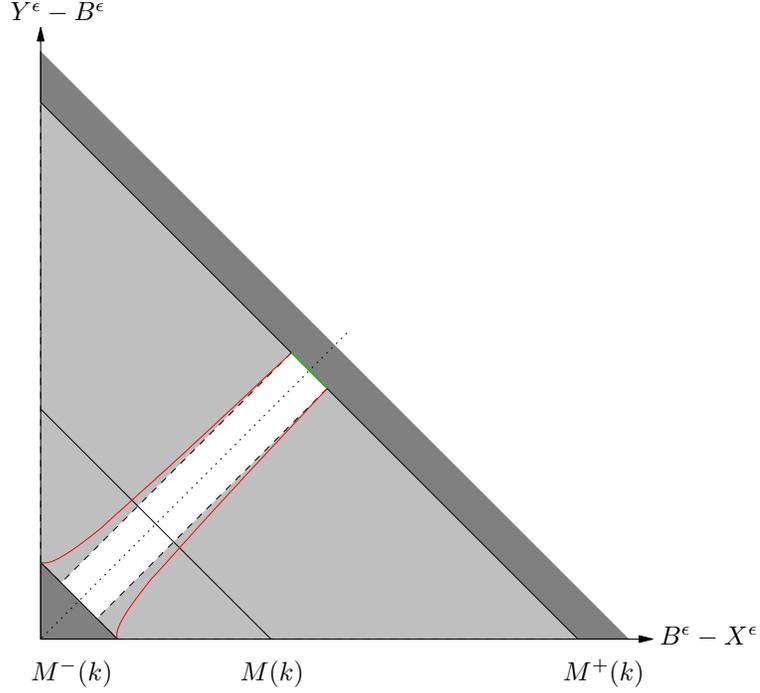}
\caption{The idea of the proof: The particle starts in $M(k)$ (black line). We show that the probability to get to $M^-(k)$ (left dark grey area) before it gets to $M^+(k)$ (right dark grey area) is larger than $1/2$. This probability is bounded from below by the product of the probability to reach the bisector (dotted line) before reaching $M^+(k)$, and the probability to get back to $M^-(k)$ before hitting the axis. In particular, we calculate the probability of a random walk with step size $\sqrt{2}\epsilon$ to stay in the white slot around the bisector. Its diameter (green line) $diam(A_{4^{k+1}})$ is a lower bound for the diameter of the area enclosed by $g(x)$ and $g^{-1}(x)$ (red lines).}
\label{fig:proofIrred}
\end{center}
\end{figure}

For step $1$ we consider a realisation $B^\epsilon(\omega)$ of the Brownian step function, and $X^\epsilon(B^\epsilon(\omega))$ attracted to this realisation with starting distance $|B^\epsilon_0-X^\epsilon_0| = d$, as well as $\bar{X}^\epsilon(B^\epsilon(\omega))$, constructed like $X^\epsilon$ and attracted to the same realisation, but with initial distance $|B^\epsilon_0-\bar{X}^\epsilon_0| = \bar{d}$, $\bar{d} > d$. Because $X^\epsilon$ is Markovian, we can easily extend our construction of $X^\epsilon$ to initial values different from $0$. Then
\begin{equation}
\left(\forall t\geq 0\right)\left[\left(B^\epsilon(\omega)-\bar{X}^\epsilon\right)_t \geq \Big(B^\epsilon(\omega)-X^\epsilon\Big)_t\right]
\end{equation}
with equality for all $t\geq r \geq 0$, whereby $r$ fulfils
\begin{equation}
\left(B^\epsilon(\omega)-\bar{X}^\epsilon\right)_r = 0 = \Big(B^\epsilon(\omega)-X^\epsilon\Big)_r.
\end{equation}
By symmetry the respective statement holds also for $Y^\epsilon-B^\epsilon$. Thus, if $\Phi^\epsilon_i(\omega)$ is smaller or equal in both coordinates than a copy $\bar{\Phi}^\epsilon_i(\omega)$ for some time i, this (in)equality will remain for all times afterwards. Thus, we can conclude that for $x < x'$
\begin{equation}\label{eq:irredPart1}
P_{(x,0)}\left(\tau_{M^-(k)} < \tau_{M^+(k)}\right)\geq P_{(x',0)}\left(\tau_{M^-(k)} < \tau_{M^+(k)}\right),
\end{equation}
because every realisation of $B^\epsilon$ fulfiling the event on the right side also fulfils the one on the left side.

As $\Phi^\epsilon$ can only increase at the axes, starting it from a point inside the quadrant will result in a decrease of both coordinates until one of the axes is hit. But then (\ref{eq:irredPart1}) applies, and therefore step $1$ is proven.\\
\ \\
In step $2$ we show
\begin{equation}
P_{m^*}\left(\tau_{\{(x,y):x=y\}} > \tau_{M^+(k)}\right) < e^{-6/7}.
\end{equation}
We assume $m^*=(0,4^k)$. The particle has two possibilities now. Either it jumps upwards the axis to $(0,4^k+\epsilon)$, or it jumps into the quadrant to $(\epsilon, 4^k-\epsilon)$. Afterwards it drifts. In this step we will ignore the drift phase for two reasons. First, the change of position by jumps is of order $\epsilon$, while it is of order $\epsilon^2$ by drifting, because $\mathbb{E}\sigma = \epsilon^2$. Furthermore, the drift direction is different from the jump direction, and for every change of position in jump direction by drifting, there is also a drift {\it down}, orthogonal to the jump direction, by the same amount at least. Thus, considering the drift would help us in reaching our aim to drift down.

We introduce the following {\it game}: sitting on the axis, the particle can either reach the bisector or it can move up the axis by $\epsilon$. As the particle needs $4^k/(2\epsilon)$\footnote{Here we neglect that the expression is meaningful for integers only, because the difference will not play a role as $\epsilon$ tends to zero.} steps to reach the bisector, but only one step to go up, the success probability is small. If we should not success, we have another chance at $(0,4^k+\epsilon)$ (even if the probability for success is smaller there) and so on, until we reach $M^+(k)$. It is well known that the probability of an one-dimensional, symmetric random walk to reach $-1$ before it reaches $k\in\mathbb{N}$, started in ${0}$, is given by $k/(k+1)$. Thus,
\begin{eqnarray}
P_{m^*}\left(\tau_{\{(x,y):x=y\}}> \tau_{M^+(k)}\right)&=&\prod_{i = 0}^{\left(4^{k+1}-4^k\right)/\epsilon-1}\frac{(4^k/2+i\epsilon)/\epsilon}{(4^k/2+i\epsilon)/\epsilon+1}\\
&=&\prod_{i = 0}^{3\cdot 4^k/\epsilon-1}\frac{4^k/2+i\epsilon}{4^k/2+i\epsilon+\epsilon}\\
&<&\left(\frac{4^k/2+3\cdot 4^k-\epsilon}{4^k/2+3\cdot 4^k}\right)^{3\cdot 4^k/\epsilon}\\
&=&\left(1-\frac{2\epsilon}{7\cdot 4^k}\right)^{3\cdot 4^k/\epsilon}\\
&\to& e^{-6/7}\textrm{ as $\epsilon$ tends to $0$.}
\end{eqnarray} 
\ \\
For part $3$ we assume that the particle has reached the bisector and is at position $(2\cdot 4^k,2\cdot 4^k)$. First, we are interested in the speed of the particle while drifting. In particular, we are looking for a uniform lower bound for the speed orthogonal to the $||\cdot ||_1$-level sets on $[0,\infty)^2\backslash M^+(k)$. If we denote the particle's current position by $(x,y)$, its speed in $x$-direction is given by $(1+x)^{-\gamma}$ and in $y$-direction by $(1+y)^{-\gamma}$, because of equation (\ref{eq:odeConst}). Thus, the speed orthogonal to the level sets is given by
\begin{equation}
v_{(x,y)}:=\sqrt{\left(1+x\right)^{-2\gamma}+\left(1+y\right)^{-2\gamma}}.
\end{equation}
Differentiation of $v$ shows that on the set $\{(x,y):x+y\leq 4^{k+1}\}$ the speed is minimized exactly on position $(2\cdot 4^k,2\cdot 4^k)$ and amounts
\begin{equation}\label{eq:minSpeed}
v_{\min} := \sqrt{2}\left(1+2\cdot 4^k\right)^{-\gamma}.
\end{equation}
Next, let us take a closer look at the movement of the particle while drifting. Observe first that a drifting particle started in $(x,y)$ will never cross the path of a second particle, started somewhere else, before it has hit one of the axes. This follows directly from our argumentation in step $1$. Let us assume that $x\leq y$. By symmetry the other case will follow immediately. In this case, the particle will first hit the $x$-axes, and that happens at time
\begin{equation}
t_x := \min \{t:h(x,t)=0\} = \frac{\left(x+1\right)^{\gamma +1}-1}{\gamma +1},
\end{equation}
which follows from the definition of $h$ in (\ref{eq:hFunc}). What constraints must hold for $y$ such that the particle will hit the axes in $M^-(k)$? Clearly, $y$ must fulfil
\begin{equation}
h(y,t_x)\leq 4^{k-1}\textrm{, or equivalently}
\end{equation}
\begin{equation}\label{eq:gFunc}
y\leq \left(\left(4^{k-1}+1\right)^{\gamma + 1}+\left(x+1\right)^{\gamma +1}-1\right)^{1/(\gamma + 1)}-1.
\end{equation}
Let us denote the right side of the last inequality by $g(x)$. By differentiation we immediately see that $g(x)-x$ is a positive, strictly decreasing function, tending to $0$ as $x$ becomes large. On the other hand, for starting position $(x,y)$, $y\leq x$, the calculation would be the same with exchanged roles of $y$ and $x$, and we would end up with $g(y)$. Thus, as long as the particle starts in
\begin{equation}
(x,y)\in A:=\left\{(x,y):\left(x+y \leq 4^{k+1}\right) \wedge \left(g^{-1}(x)\leq y\leq g(x)\right)\right\},
\end{equation}
it will first reach $M^-(k)$ and hit the axis only afterwards. This leads to the crucial observation: as long as the particle only jumps to positions $(x,y)\in A$, we do not have to worry that the particle will reach the axis before reaching $M^-(k)$.

Let us define the level sets of $A$ by
\begin{equation}
A_l := A\cap \left\{(x,y):x+y=l\right\}.
\end{equation}
We can interprete $A_l$ as a one dimensional interval or a piece of a line, and because $g(x)-x$ and $g^{-1}(x)-x$ are tending to zero, the length of this interval, denoted by $diam(A_l)$, decreases as $l$ increases. Thus, we would like to know $diam(A_{4^{k+1}})$, as it is a lower bound for all $l$ we are interested in. Because the jump direction of the particle is parallel to the $A_l$, we can afterwards estimate, how much time the particle will spend in $A$ when performing jumps. However, it is not possible to calculate $diam(A_{4^{k+1}})$ explicitly, but by Pythagorean Theorem, the symmetry of $g(x)$ and $g^{-1}(x)$, as well as the decrement of $g(x)-x$ again, we have
\begin{equation}
diam(A_{4^{k+1}}) \geq \sqrt{2}\left(g(2\cdot 4^k)-2\cdot 4^k\right) =: d_k.
\end{equation}
Our ansatz is
\begin{equation}
d_k \geq D4^k
\end{equation}
for a constant $D$, independent of $k$ if $k$ is large enough. Notice that function $g$ as defined in (\ref{eq:gFunc}) is basically the $||\cdot ||_{\gamma +1}$-norm of $(4^{k-1},x)$ and decreases in $\gamma$. {\it W.l.o.g.}, we may assume that $\gamma = 1$.
\begin{equation}
\sqrt{2}\left(\left(\left(4^{k-1}+1\right)^2+\left(2\cdot 4^k+1\right)^2-1\right)^{1/2}-1-2\cdot4^k\right)\geq D4^k
\end{equation}
easily transforms to
\begin{equation}
\sqrt{2}D\leq \left(\frac{\sqrt{65}-8}{2}-O(4^{-k})\right).
\end{equation}
Finally, we have to answer the question, how long do we remain in an interval of diameter $\sqrt{2}D4^k$, when we start in the centre and perfom a random walk with step size $\sqrt{2}\epsilon$. Let us denote a standard random walk with step size $1$ by $R$, then we are looking for the hitting time
\begin{equation}
\xi^\epsilon(k):=\min\left\{n:R_n\notin (-D4^k/\epsilon,D4^k/\epsilon)\right\}.
\end{equation}
It is well known that
\begin{equation}
\mathbb{E}\xi^\epsilon(k) = \left(\frac{D4^k}{\epsilon}\right)^2.
\end{equation}
We would like to have a lower bound for the probability that we stay in the interval for $c\mathbb{E}\xi^\epsilon(k)$ steps at least, whereby $c\in(0,1)$ can be arbitrarily small. It will be sufficient to show that this probability tends to $1$ if $c$ goes to $0$. As $\epsilon$ tends to zero, Donsker's principle (see chapter 2.4.D of \cite{karatzas91}) tells us that
\begin{equation}
\lim_{\epsilon \to 0}\frac{D4^k}{\epsilon} \tilde{R}_{(D4^k/\epsilon)^2t}\stackrel{d}{=}B_t,
\end{equation}
whereby $\tilde{R}$ is the linear interpolation of $R$. We define the exit time of a Brownian motion $B$ from $(-1,1)$ by
\begin{equation}
\bar{\xi} := \inf\left\{t:B_t \notin (-1,1)\right\}.
\end{equation}
If we use Donsker's principle we get for $\epsilon$ tending to zero and a constant $\alpha > 0$
\begin{eqnarray}
P\left(\xi^\epsilon(k) < c\mathbb{E}\xi^\epsilon(k)\right)&=&P\left(\bar{\xi}<c\right)\\
&=&P\left(\exp\left(-\alpha\bar{\xi}\right)>\exp\left(-\alpha c\right)\right)\\
&<&\frac{\mathbb{E}e^{-\alpha\bar{\xi}}}{e^{-\alpha c}}\label{eq:markov}\\
&=&\frac{e^{\alpha c}}{\cosh\left(\sqrt{2\alpha}\right)}\label{eq:borodin}.
\end{eqnarray}
In line (\ref{eq:markov}) we have used the Markov inequality, in line (\ref{eq:borodin}) the explicit formula for the Laplace transform of $\bar{\xi}$ (refer to formula $3.0.1$ in \cite{borodin96}). As $\alpha$ was chosen arbitrary, we would like to minimize line (\ref{eq:borodin}) as a function of $\alpha$. Differentiation shows that the optimizing $\alpha$ fulfils
\begin{equation}\label{eq:optAlpha}
\cosh\left(\sqrt{2\alpha}\right)=\frac{\sinh\left(\sqrt{2\alpha}\right)}{c\sqrt{2\alpha}}
\end{equation}
Using equality (\ref{eq:optAlpha}) in (\ref{eq:borodin}) results in
\begin{equation}
P\left(\xi^\epsilon(k) < c\mathbb{E}\xi^\epsilon(k)\right)<\frac{c\sqrt{2\alpha}e^{\alpha c}}{\sinh\left(\sqrt{2\alpha}\right)}
\end{equation}
which tends to zero as $c$ tends to zero. Let us call
\begin{equation}
p_c := P\left(\xi^\epsilon(k) \geq c\mathbb{E}\xi^\epsilon(k)\right)
\end{equation}
and observe that one can choose $c$ s.th. $p_c$ is arbitrarily close to one.\\ 
\ \\
In step $4$ we summarise the results from the steps before. When the particle starts in $M(k)$, the probability to reach the bisector, before it reaches $M^+(k)$ is larger than $1-\exp(-6/7)$ by steps $1$ and $2$. By step $3$ we can find a $c^*>0$ such that $(1-\exp(-6/7))p_{c^*} > 1/2$. This means, we will stay within $A$ for $c^*(D4^k/\epsilon)^2$ steps at least. As the particle drifts with a minimal speed $v_{\min}$, defined in (\ref{eq:minSpeed}), it will decrease its distance to the origin in terms of the $||\cdot ||_1$-norm by
\begin{eqnarray}
&&\sum_{i=1}^{c^*(D4^k/\epsilon)^2} \sqrt{2}\left(1+2\cdot 4^k\right)^{-\gamma}\sigma_i \\
&=&c^*D^2 4^{2k}\sqrt{2}\left(1+2\cdot 4^k\right)^{-\gamma}\label{eq:indepAlpha}\\
&=&O\left(4^{(2-\gamma)k}\right)\label{eq:order1}
\end{eqnarray}
for $\epsilon$ tending to zero. In line (\ref{eq:indepAlpha}) we have used the LLN for the {\it i.i.d.} $\sigma_i$, which have expectation $\epsilon^2$. Thus, the distance, the particle covers, is of order $4^{(2-\gamma)k}$. On the other hand, the distance, the particle has to cover to get to $M^-(k)$, is by construction of the proof smaller or equal than
\begin{equation}
4^{k+1}-4^{k-1} = O\left(4^k\right)\label{eq:order2}.
\end{equation}
Obviously (\ref{eq:order1}) dominates (\ref{eq:order2}) for $\gamma < 1$, which finishes the proof in the subcritical case for $Y^\epsilon-X^\epsilon$.\\
\ \\
To see that the result transfers to $Y-X$, we consider the process $\tilde{X}^\epsilon$, constructed in the same way like $X^\epsilon$ but with the modified {\it ode}
\begin{equation}\label{eq:modOdeConst}
\frac{d}{dt}f(t)=\left((1+2\epsilon)+b-f(t)\right)^{-\gamma};\ f(0)=0
\end{equation}
instead of the original {\it ode} (\ref{eq:odeConst}). Equivalently, we define $\tilde{Y}^\epsilon(B^\epsilon) := -\tilde{X}^\epsilon(-B^\epsilon)$. The proof easily shows that the change of the constant from $1$ to $1+2\epsilon$ in (\ref{eq:modOdeConst}) does not change the calculations or the result in an essential way (apart from longer equations because of the extra term $2\epsilon$). Thus, $\tilde{Y}^\epsilon-\tilde{X}^\epsilon$ is also recurrent for $\gamma < 1$. The crucial observation is that these {\it auxiliary} processes sandwich the original processes:
\begin{equation}
X^{\epsilon'}_t\geq\tilde{X}^\epsilon_t-\epsilon\textrm{ and }Y^{\epsilon'}_t\leq\tilde{Y}^\epsilon_t+\epsilon
\end{equation}
for all $\epsilon' < \epsilon$. This holds due to the fact that $|X^{\epsilon'}_{\bar{\sigma}_1^\epsilon}-X^{\epsilon}_{\bar{\sigma}_1^\epsilon}|<\epsilon$ and $|B^\epsilon-B^{\epsilon'}|<\epsilon$. Thus the difference of speed cannot be larger than $2\epsilon$. This argument extends inductively to all later times $\bar{\sigma}_i$. It follows
\begin{eqnarray}
Y_t-X_t&=&\lim_{\epsilon\to 0}(Y^\epsilon_t-X^\epsilon_t)\\
&\leq&\tilde{Y}^\epsilon_t-\tilde{X}^\epsilon_t+2\epsilon,
\end{eqnarray}
which proves recurrence for $Y-X$.

\subsection{Proof of the supercritical case: $\gamma > 1$}\label{sec:supercrit}
We first define, what transience of Markov chains means.
\begin{defi}
For any $A\subset [0,\infty)^2$ let
\begin{equation}
\eta_A := \sum_{i=0}^\infty \mathbbm{1}_{\left\{\Phi^\epsilon_i\in A\right\}}
\end{equation}
be the number of visits of $\Phi^\epsilon$ in $A$. A set $A$ is called {\it uniformly transient} if for there exists $M < \infty$ such that $\mathbb{E}_{(x,y)}(\eta_A) \leq M$ for all $(x,y)\in A$. We call $\Phi^\epsilon$ {\it transient} if there is a countable cover of $[0,\infty)^2$ with uniformly transient sets.
\end{defi}
We will use the next theorem to show that $\Phi^\epsilon$ is transient in the upper sense. It is stated as a more general result in \cite{meyn96}, 8.0.2(i).
\begin{thm}\label{thm:meyn}
The chain $\Phi^\epsilon$ is transient if and only if there exists a bounded, non-negative function $g:[0,\infty)^2\to [0,\infty)$ and a set $\mathcal{B} \in \mathfrak{B}^+([0,\infty)^2)$ such that for all $\left(\bar{x},\bar{y}\right)\in [0,\infty)^2\backslash \mathcal{B}$,
\begin{equation}\label{eq:toShowOrig}
Lg(\bar{x},\bar{y}) = \int_{[0,\infty]^2} P_{(\bar{x},\bar{y})}(d(x,y))g(x,y) \geq g(\bar{x},\bar{y})
\end{equation}
and
\begin{equation}
D:=\left\{(x,y)\in [0,\infty)^2\ \left|\ g(x,y)>\sup_{(\bar{x},\bar{y})\in \mathcal{B}} g(\bar{x},\bar{y})\right.\right\}\in\mathfrak{B}^+([0,\infty)^2).
\end{equation}
\end{thm}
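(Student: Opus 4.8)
The assertion coincides, up to the slightly informal notation in which it is stated, with the drift criterion for transience of a $\Psi$-irreducible chain (\cite{meyn96}, Thm.~8.0.2(i); see also Thm.~8.4.2 there), and its hypotheses are met here because $\Phi^\epsilon$ has just been shown to be $\Psi$-irreducible. My plan is therefore to invoke that reference; for completeness I sketch the implication that is actually needed later, namely that the existence of $g$ and $\mathcal{B}$ forces transience. Put $\tau:=\min\{i\ge 0:\Phi^\epsilon_i\in\mathcal{B}\}$ and $N_n:=g(\Phi^\epsilon_{n\wedge\tau})$. On $\{n<\tau\}$ the chain sits in $[0,\infty)^2\setminus\mathcal{B}$, so hypothesis (\ref{eq:toShowOrig}) (i.e.\ $\int P_{(x,y)}g\ge g(x,y)$ off $\mathcal{B}$) gives $\mathbb{E}[N_{n+1}\mid\mathscr{F}_n]\ge N_n$, while on $\{n\ge\tau\}$ the sequence $N$ is constant; hence $(N_n)_n$ is a submartingale, and it is bounded because $g$ is. By the martingale convergence theorem $N_n\to N_\infty$ a.s., and since $0\le N_n\le\|g\|_\infty$, bounded convergence gives $\mathbb{E}_x[N_\infty]=\lim_n\mathbb{E}_x[N_n]\ge\mathbb{E}_x[N_0]=g(x)$ for every $x$.

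Next I would fix a starting point $x\in D$. Then $g(x)>\sup_{\mathcal{B}}g\ge g|_{\mathcal{B}}$, so $x\notin\mathcal{B}$ and $\tau\ge 1$; moreover, on $\{\tau<\infty\}$ the limit equals $N_\infty=g(\Phi^\epsilon_\tau)\le\sup_{\mathcal{B}}g$, whereas always $N_\infty\le\|g\|_\infty$. Combining these two facts,
\begin{equation}
g(x)\ \le\ \mathbb{E}_x[N_\infty]\ \le\ \Bigl(\sup_{\mathcal{B}}g\Bigr)P_x(\tau<\infty)+\|g\|_\infty\,P_x(\tau=\infty),
\end{equation}
and therefore
\begin{equation}
P_x(\tau=\infty)\ \ge\ \frac{g(x)-\sup_{\mathcal{B}}g}{\|g\|_\infty-\sup_{\mathcal{B}}g}\ >\ 0,
\end{equation}
the denominator being positive because $D$ is nonempty. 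So from every point of $D$ the chain avoids $\mathcal{B}$ for all time with positive probability.

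To close the argument I would appeal to the recurrence/transience dichotomy for $\Psi$-irreducible chains (\cite{meyn96}, Thm.~8.0.1): if $\Phi^\epsilon$ were recurrent then, $\mathcal{B}$ belonging to $\mathfrak{B}^+([0,\infty)^2)$, the chain would reach $\mathcal{B}$ in finite time from $\Psi$-almost every starting point (the Harris property holding on a full absorbing set); since $\Psi(D)>0$, at least one such point lies in $D$, which contradicts the last display. Hence $\Phi^\epsilon$ must be transient. The converse implication --- building such a pair $(g,\mathcal{B})$ out of a uniformly transient set in $\mathfrak{B}^+$, e.g.\ from a renormalised hitting probability of that set, which is harmonic off it --- will not be used in the sequel, and for it I would simply cite \cite{meyn96}.

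The only point I expect to need genuine care is this last paragraph: turning the coarse statement ``$\mathcal{B}$ is avoided forever with positive probability from $D$'' into the precise form of the dichotomy, that is, correctly routing through the full absorbing set on which a recurrent $\Psi$-irreducible chain is Harris recurrent. The submartingale computation itself is completely routine once $g$ is known to be bounded and to satisfy (\ref{eq:toShowOrig}) outside $\mathcal{B}$.
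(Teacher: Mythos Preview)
Your proposal is correct and matches the paper's treatment: the paper does not prove this theorem at all but simply cites \cite{meyn96}, Theorem~8.0.2(i), exactly as you do. Your additional submartingale sketch of the implication ``existence of $g$ and $\mathcal{B}$ $\Rightarrow$ transience'' is sound and in fact goes beyond what the paper provides.
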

Basically, we have to find a certain function $g$ such that we jump away from the origin in expectation with respect to $g$. This must hold outside a compact set $\mathcal{B}$ containing the origin. To find a proper $\mathcal{B}$ we set for all $z>0$  
\begin{equation}
\mathcal{B}_z := \left\{\left(x,y\right)\in[0,\infty)^2\ \big|\ \left|\left|\left(x+1,y+1\right)\right|\right|_{\gamma +1} = z\right\}.
\end{equation}
For $g$ we choose
\begin{equation}
g(x,y) := 1-\left|\left|\left(x+1,y+1\right)\right|\right|_{\gamma +1}^{-1}.
\end{equation}
If we can find a $\bar{z}$, remaining finite as $\epsilon$ tends to zero, s.th. equation (\ref{eq:toShowOrig}) holds for all $\left(x,y\right)\in \mathcal{B}_z,\ z\geq \bar{z}$, we are done. Recall what happens in one step of $\Phi^\epsilon$ in the underlying process, described on page \pageref{sec:finerStruc}. Equation (\ref{eq:toShowOrig}) becomes
\begin{eqnarray}\label{eq:toShowMod1}
\phantom{+}\frac{1}{2}\int_0^\infty P\left(\sigma\in dt\right)g(h(t,\bar{x}+\epsilon),h(t,\bar{y}-\epsilon))&&\\
+\frac{1}{2}\int_0^\infty P\left(\sigma\in dt\right)g(h(t,\bar{x}-\epsilon),h(t,\bar{y}+\epsilon))&\geq&g(\bar{x},\bar{y})\nonumber
\end{eqnarray}
whereby $(\bar{x},\bar{y})\in \mathcal{B}_{\bar{z}}$. Because of the $\epsilon$-jump of $B^\epsilon$ at time $\bar{\sigma}$, the integral splits into two parts. Within both integrals the only source of randomness is $\sigma$. Given its value, we can calculate the next position of $\Phi^\epsilon$ by using function $h$, and finally apply $g$ to this value.\\
Using the definition of $g$ and observing that the integral of the density $P(\bar{\sigma}\in dt)$ is one, (\ref{eq:toShowMod1}) easily transforms to
\begin{eqnarray}\label{eq:toShowMod2}
\phantom{+}\frac{1}{2}\int_0^\infty P\left(\sigma\in dt\right)\left|\left|(h(t,\bar{x}+\epsilon)+1,h(t,\bar{y}-\epsilon)+1)\right|\right|_{\gamma+1}^{-1}&&\\
+\frac{1}{2}\int_0^\infty P\left(\sigma\in dt\right)\left|\left|(h(t,\bar{x}-\epsilon)+1,h(t,\bar{y}+\epsilon))+1\right|\right|_{\gamma+1}^{-1}&\leq&\bar{z}^{-1}\nonumber
\end{eqnarray}
As already argued, $\sigma$ is small, or rather we can change the upper bound of the integrals from $\infty$ to $\epsilon$ at the expense of order $\exp(-1/\epsilon)$. Furthermore, let us assume for the moment that $\bar{x},\bar{y}\geq 2\epsilon$. As jump size and drift time are $\epsilon$ at most and the drift speed is bounded from above by $1$ this condition avoids that we have to handle cases in which the axes are reached. Observe that the only special cases to check later on are $(0,\bar{y})$ and $(\bar{x},0)$, because we can choose for every pair $\bar{x},\bar{y}>0$ an $\epsilon > 0$ such that the condition above is fulfiled, and we let $\epsilon$ tend to zero. Now we can use Taylor approximations for $\epsilon$ and $t$ to get
\begin{eqnarray}
&&\frac{1}{2}\left(\left|\left|(h(t,\bar{x}+\epsilon)+1,h(t,\bar{y}-\epsilon)+1)\right|\right|_{\gamma+1}^{-1}\right.\\
&&\phantom{\frac{1}{2}}\left.+\left|\left|(h(t,\bar{x}-\epsilon)+1,h(t,\bar{y}+\epsilon)+1)\right|\right|_{\gamma+1}^{-1}\right)\nonumber\\
&=&\frac{1}{2}\left(\left(\left(\bar{x}+\epsilon+1\right)^{\gamma+1}+\left(\bar{y}-\epsilon+1\right)^{\gamma+1}-2\left(\gamma+1\right)t\right)^{-\frac{1}{\gamma+1}}\right.\\
&&\phantom{\frac{1}{2}}\left.+\left(\left(\bar{x}+\epsilon+1\right)^{\gamma+1}+\left(\bar{y}-\epsilon+1\right)^{\gamma+1}-2\left(\gamma+1\right)t\right)^{-\frac{1}{\gamma+1}}\right)\nonumber\\
&=&\bar{z}^{-1}+2\bar{z}^{-(\gamma+2)}t-\frac{\gamma}{2}\left(\left(\bar{x}+1\right)^{\gamma-1}+\left(\bar{y}+1\right)^{\gamma-1}\right)\bar{z}^{-(\gamma+2)}\epsilon^2\\
&&\phantom{\bar{z}^{-1}}+\left(1+t\right)O(\bar{z}^{-(2\gamma+3)}\epsilon^2).\nonumber
\end{eqnarray}
Because
\begin{equation}
\int_0^\epsilon P\left(\sigma\in dt\right)t \leq \mathbb{E}\sigma=\epsilon^2,
\end{equation}
we can rewrite (\ref{eq:toShowMod2}) as
\begin{eqnarray}
&&\bar{z}^{-1}+2\bar{z}^{-(\gamma+2)}\epsilon^2+O(\bar{z}^{-(2\gamma+3)}\epsilon^2)\\
&\leq&\bar{z}^{-1}+\frac{\gamma}{2}\left(\left(\bar{x}+1\right)^{\gamma-1}+\left(\bar{y}+1\right)^{\gamma-1}\right)\bar{z}^{-(\gamma+2)}\epsilon^2\nonumber
\end{eqnarray}
which holds if
\begin{equation}\label{eq:toShowMod3}
\gamma\left(\left(\bar{x}+1\right)^{\gamma-1}+\left(\bar{y}+1\right)^{\gamma-1}\right)\geq 4.
\end{equation}
Notice that equation (\ref{eq:toShowMod3}) is fulfiled for $\bar{z}$ large enough and $\gamma > 1$, only.\\
\ \\
It remains to show the special case if $\bar{x}$ or $\bar{y}$ is zero. Because of symmetry it is sufficient to treat one of these cases. We assume $\bar{x}=0$ and thus $\bar{y}=(\bar{z}^{\gamma+1}-1)^{1/(\gamma+1)}-1$. Then condition $(\ref{eq:toShowMod1})$ becomes
\begin{eqnarray}
\bar{z}^{-1}&\geq&\phantom{+}\frac{1}{2}\int_0^\infty P\left(\bar{\sigma}\in dt\right)\left|\left|(h(t,\epsilon)+1,h(t,\bar{y}-\epsilon)+1)\right|\right|_{\gamma+1}^{-1}\\
&&+\frac{1}{2}\int_0^\infty P\left(\bar{\sigma}\in dt\right)\left|\left|(1,h(t,\bar{y}+\epsilon)+1)\right|\right|_{\gamma+1}^{-1}.\nonumber
\end{eqnarray}
Applying Taylor approximation in the same way as above results in
\begin{equation}
\bar{z}^{-1}\geq \bar{z}^{-1}-\bar{z}^{-(\gamma+2)}\epsilon+O(\epsilon^2)
\end{equation}
which is true for all $\gamma$ and arbitrary $\bar{z}$.\\
\ \\
The idea, how to transfer the transient result to $Y-X$, is basically equal to the recurrent case on page \pageref{eq:modOdeConst}. This time we consider the process $\hat{X}^\epsilon$, constructed like $X^\epsilon$ but with the modified {\it ode}
\begin{equation}
\frac{d}{dt}f(t)=\left((1-2\epsilon)+b-f(t)\right)^{-\gamma};\ f(0)=0
\end{equation}
instead of (\ref{eq:odeConst}). Equivalently, we define $\hat{Y}^\epsilon(B^\epsilon) := -\hat{X}^\epsilon(-B^\epsilon)$. Again, the proof is not essentially changed by this modifications, and thus, $\hat{Y}^\epsilon-\hat{X}^\epsilon$ is also transient for $\gamma > 1$. Observe that the {\it auxiliary} processes are sandwiched by the original processes:
\begin{equation}
X^{\epsilon'}_t\leq\hat{X}^\epsilon_t+\epsilon\textrm{ and }Y^{\epsilon'}_t\geq\hat{Y}^\epsilon_t-\epsilon
\end{equation}
for all $\epsilon' < \epsilon$. This follows from the same idea as in the recurrent case. It follows
\begin{eqnarray}
Y_t-X_t&=&\lim_{\epsilon\to 0}(Y^\epsilon_t-X^\epsilon_t)\\
&\geq&\hat{Y}^\epsilon_t-\hat{X}^\epsilon_t-2\epsilon,
\end{eqnarray}
which implies the desired result.

\section{Conclusions}\label{sec:conclusions}
In this last section we describe, what our results mean for the {\it opinion game} \cite{bovier06}. We will begin with a short description of the model. Although it is introduced in great generality in the original article, we will adhere to this implementation, which has produced interesting results in the simulations. For a deeper discussion about the choice of the parameters we refer to the original paper. In the second subsection we will point out the connections between our work and the {\it opinion game}.

\subsection{The opinion game}
Bovier et al. consider a generalised, resp. virtual, order book containing the opinion of each participating agent about the value of the share. Here the notion of {\it value} is distinguished from the one of {\it price}. While the price will be determined by the market and is the same for all agents, the value is driven by fundamental and speculative considerations, and thus, varies individually. This is a fundamental difference to the modelling of a classical order book. While a classical order book only keeps track of placed orders, the generalised order book {\it knows} the opinion of all market participants, independent on whether they have made them public. The dynamics of the model are driven by the change of agents' opinion.

A market with $N$ traders trading $M<N$ stocks is considered. For simplification every trader can own at most one share, and furthermore, a discrete time and space setting is assumed. The state of trader $i$ is given by his opinion, denoted by $p_i \in \mathbb{Z}$, and the number of stocks he posseses, $n_i\in\{0,1\}$. A trader with one share is called a {\it buyer}, one without a share is called a {\it seller}. The state of the order book is given by the states of all traders. A state is said to be {\it stable}, if the traders with the $M$ highest opinions posses a share. In particular, one can fully describe the stable state of the order book by the price opinions ${\bf p}:=(p_1,\dots,p_N)$ only. For stable states one can define an ask price as the minimum opinion of all traders possesing a share:
\begin{equation}
p^a := \min\{p_i:n_i = 1\},
\end{equation}    
and the bid price as the maximum opinion of all traders without a share:
\begin{equation}
p^b := \max\{p_i:n_i=0\}.
\end{equation}
The {\it current (logarithmic) price} of the stock is defined by $p:=(p^a-p^b)/2$. The update of the order book state ${\bf p}$ happens in three steps:
\begin{enumerate}
\item At time $(t+1)\in\mathbb{N}_0$, select trader $i$ with probability $g(\cdot;{\bf p}(t),t)$.
\item The selected trader $i$ changes his opinion to $p_i(t)+d$, whereby $d\in\mathbb{Z}$ has distribution $f(\cdot;{\bf p}(t),i,t)$.
\item If ${\bf p}' = (p_1(t),\dots,p_i(t)+d,\cdots,p_N(t))$ is stable, then ${\bf p}(t+1)={\bf p}'$. Otherwise, trader $i$ exchanges his ownership state $n_i(t)$ with the lowest asker, resp. highest bidder $j$. Afterwards, to avoid a direct re-trade, both participants change their opinion away from the trading price.
\end{enumerate}

The function $g$ is defined by
\begin{equation}
g(i;{\bf p}(t),t):=h(p_i(t)-p(t))/Z_g({\bf p}(t)),
\end{equation}
whereby
\begin{equation}\label{eq:distFunc}
h(x):=1/\left(1+|x|\right)^\gamma,\ \gamma>0,
\end{equation}
and $Z_g$ normalizes $g$, s.th. $\sum_{i=1}^N g(i;{\bf p}(t),t) = 1$.

The size of $d$ is chosen from the set $\{-l, \dots, l\}$ with probability
\begin{equation}
f(d;{\bf p}(t),i,t):=\frac{1}{2l+1}\left(\left(\delta_{p_i,p(t)}\delta_{\textrm{ext}}(t)\right)^d \wedge 1\right)\textrm{ for }d\neq 0
\end{equation}
and $f(0;{\bf p}(t),i,t) = 1-\sum_{0<|k|\leq l}f(k;{\bf p}(t),i,t)$. The parameter $\delta_{p_i,p(t)}$ describes the tendency to change the opinion into direction of the price. Thus it is larger than $1$ for $p_i < p$ and smaller for $p_i > p$. The second parameter, $\delta_{\textrm{ext}}$, simulates outer influences on the opinion change, e.g. news or rumors. This force is the same for all traders, but changes its strength in time. Good results were achieved by taking $l = 4$, $\delta_{p_i,p(t)} = \exp(0.1)$ for buyers, and $\delta_{p_i,p(t)} = \exp(-0.1)$ for sellers. The external influence changes its strength after independent, exponentially distributed times with rate $1/2000$ to $\exp(\epsilon_i s'_i)$, whereby $\epsilon_i$ are Bernoulli with $P(\epsilon_i=\pm 1)=1/2$ and $s'_i$ are Exponential with mean $0.12$. Observe that in expectation the external force is slightly stronger than the drift to the price.

The jump away from the trading price in the last step is implemented by setting
\begin{equation}
p_i(t+1) = p^b(t)-k,\ \ p_j(t+1)=p^b(t)+k
\end{equation}
if trader $i$ sells a stock in this step, and
\begin{equation}
p_i(t+1) = p^a(t)+k,\ \ p_j(t+1)=p^a(t)-k
\end{equation}
if he buys it. In the simulations $k$ is a uniformly distributed variable on $\{5,\dots,20\}$.

In the simulations the price is recorded every $100$ opinion updates. Thus, if we talk about one simulation step in the next section, we mean $100$ steps of the underlying dynamics. 

\subsection{Our result in context}
Simulations show that the price process produced by these dynamics has some interesting properties. At first, the distribution of returns, that is the relative change of the price in one step, has heavy tails. Furthermore, the volatility, that is the average size of returns in some time interval, shows correlations on much larger time scales than the implementation would suggest. For the volatility of an interval of size $100$, correlations after $10^4$ steps can be observed. This is suprising, because $10^4$ recorded steps are equal to $10^6$ steps of the dynamics. But the model is Markovian and even the strength of the external influence changes after only $2\cdot 10^3$ steps.

The explanation for these observations can be found in two features of the implementation. As alreday suggested, the external force brings excitement into the market. Else the traders would basically perform random walks into the direction of the price. The returns would be much smaller, an interesting structure of the volatility would not exist. This coincides with the {\it Efficient Market Hypothesis}, because in a world without news and rumors there are no reasons for price changes.

But the external force on its own does not explain the memory of the system in terms of volatility. This behaviour arises from the slower update speed of traders far away from the current price. This mechanism makes sure that the system {\it remembers} price changes on large time scales. If we observe an order book state in which a group of traders has a large distance to the current price, we can conclude, the price must have been in the region of the traders before, as it is very unlikely that a whole group of traders has moved against its drift. Furthermore, after fast price movements the distance between ask and bid price, called {\it gap}, is larger than average and needs some time to recover. In these periods the market is {\it illiquid} and a small number of trades can move the price a lot, which results in an increased volatility. Increased volatility after large price movements is a well observed feature of real world markets.

\begin{figure}[htb]
\begin{center}
\frame{\includegraphics[width=.45\textwidth]{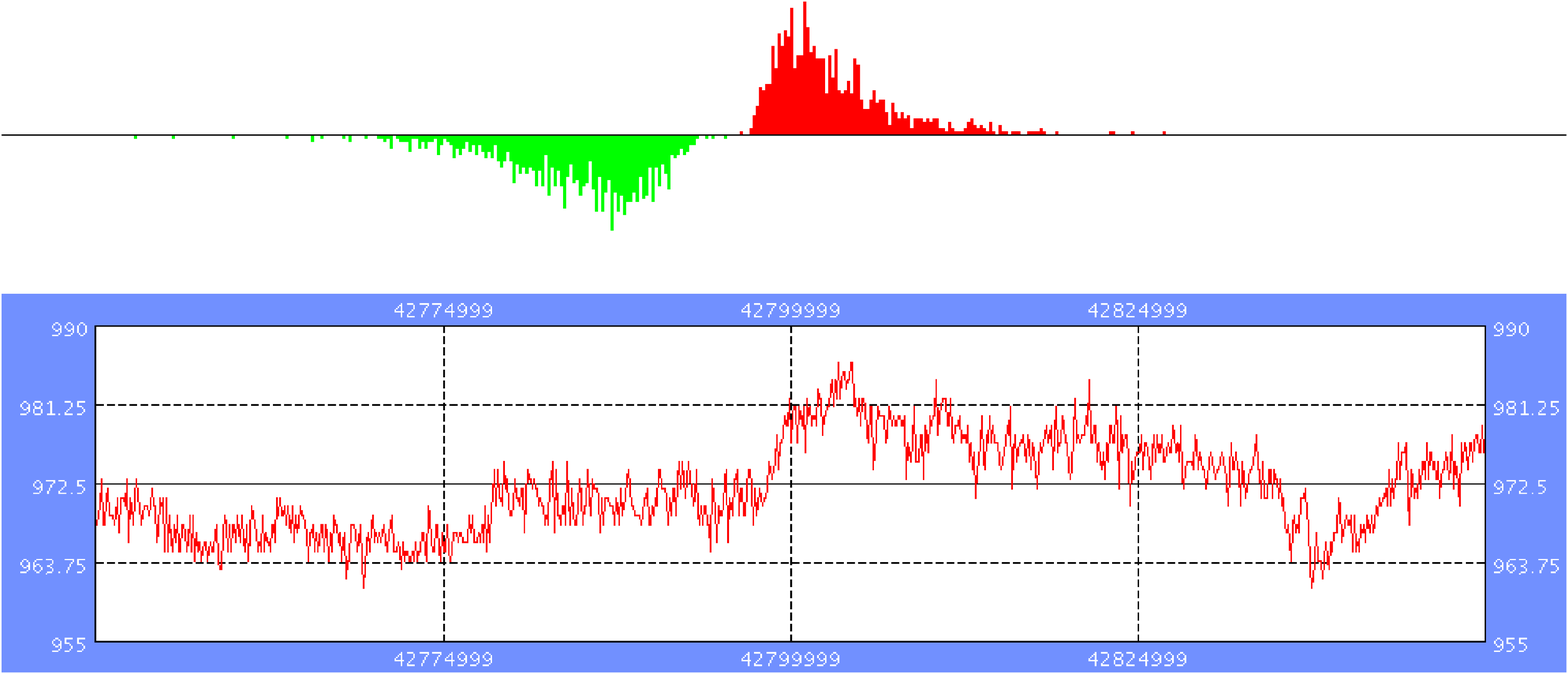}}
\frame{\includegraphics[width=.45\textwidth]{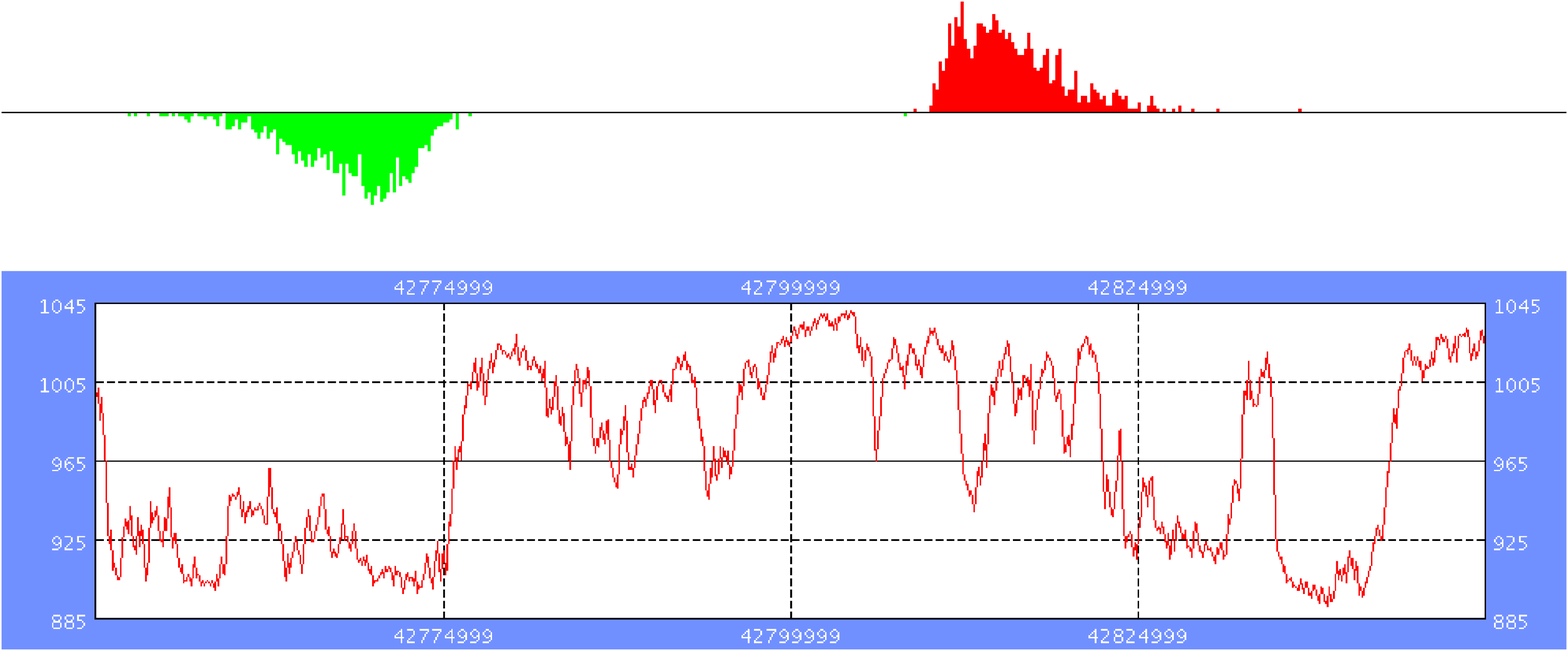}}
\caption{Screenshots of the virtual order books after $428500$ simulation steps for $\gamma = 1.5$ (left) and $\gamma = 1.6$ (right) with same initial conditions and same realisation of external influences. Observe the different distances between buyers (green) and sellers (red) and the different behaviour of the price processes (blue box).}
\label{fig:conf}
\end{center}
\end{figure}

\begin{figure}[htb]
\begin{center}
\includegraphics[width=.45\textwidth]{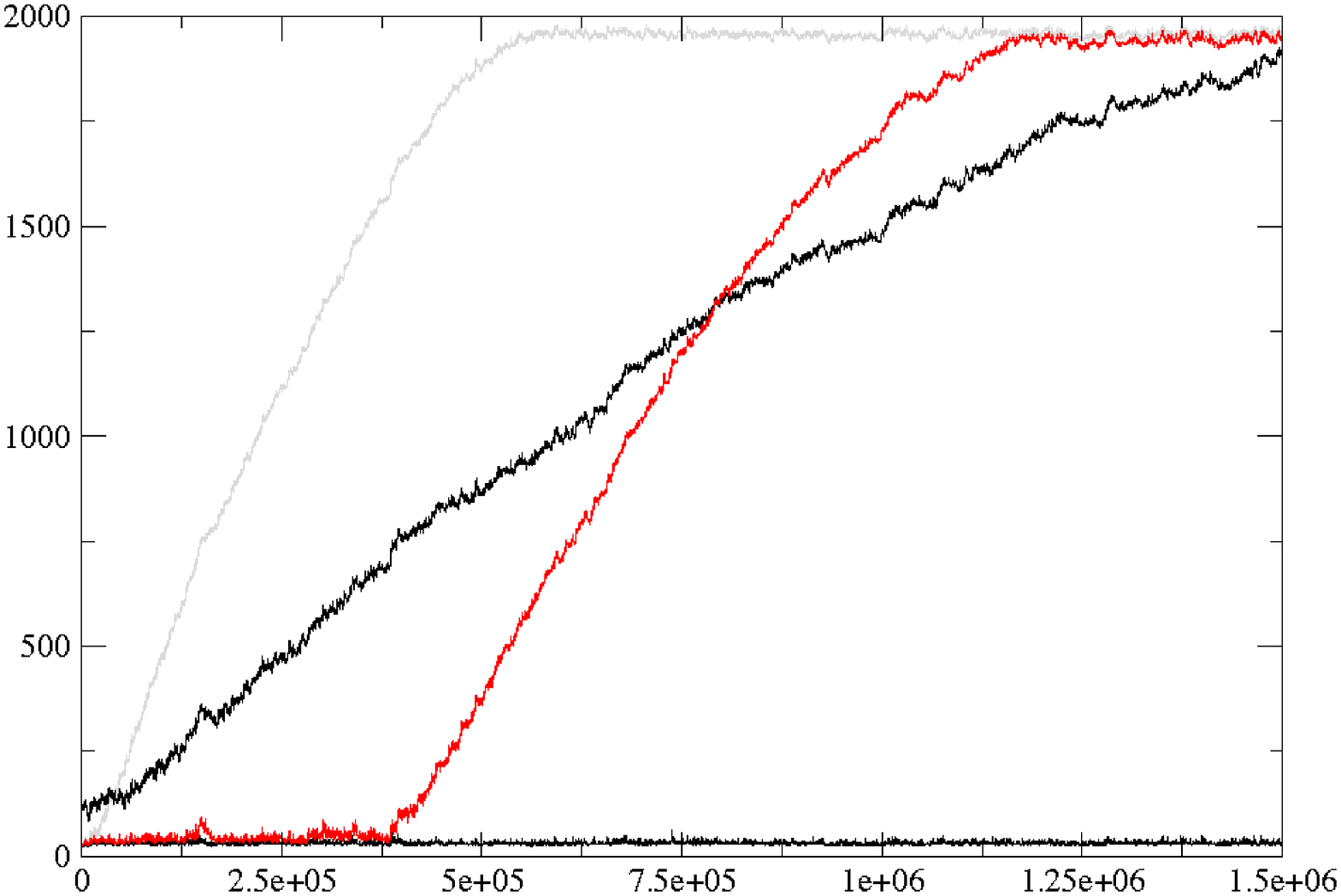}
\includegraphics[width=.45\textwidth]{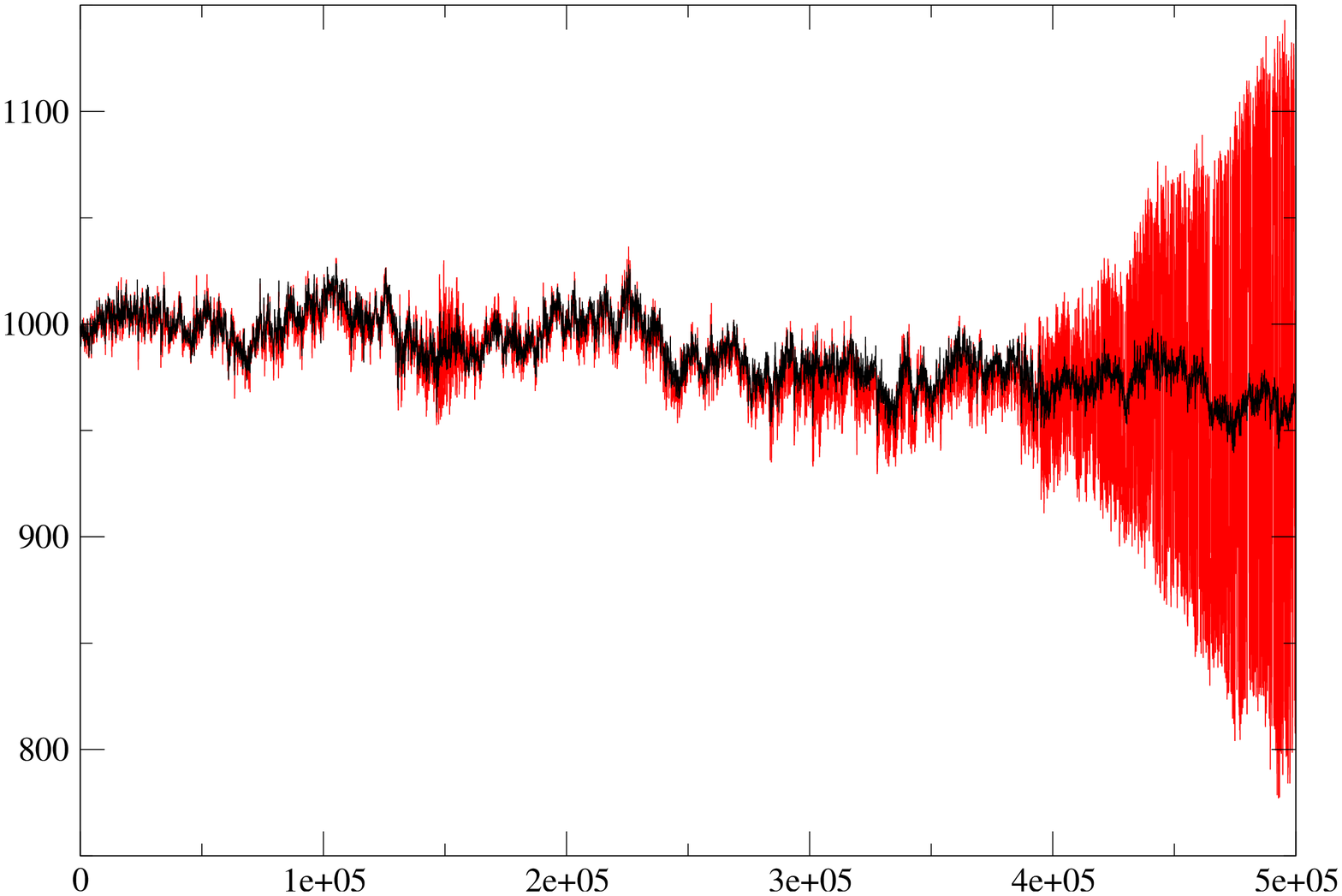}\\
\caption{The left graph shows the {\it gap} of the system for different $\gamma$. While it is stable for $\gamma = 1.5$ (black lower graph), it increases for $\gamma = 1.6$ (red) and $\gamma = 1.7$ (grey). However, if the system is started with $\gamma = 1.5$ but with an artificially enlarged gap, it also increases (black increasing graph). The convergence to a value below $2000$ is due to a restriction of the state space in the numerical simulations. The right graph shows the stable resp. unstable behaviour for $\gamma = 1.5$ (black) and $\gamma = 1.6$ (red) in terms of the price process.}\label{fig:price}
\end{center}
\end{figure}

Thus the connection of update speed and distance to the price is of paramount importance for the model. Indeed, the larger $\gamma$ is chosen in formula (\ref{eq:distFunc}) the better the just explained phenomena can be observed. However, a larger $\gamma$ contains the risk of instablity of the whole system. It turns out that once the gap has exceeded a certain size (depending on $\gamma$), it cannot recover anymore and the two groups, buyers and sellers, drift away from each other. Then the price waves between these groups, driven by two traders, one from each group, which were able to get away and now basically move according to the external drift without any resistance by surrounding traders. For $\gamma \geq 1.6$ this happens quite fast while the model has remained stable in simulations over several days for $\gamma = 1.5$ (Figure \ref{fig:conf}). On the other hand, if we start a simulation already with a large gap and $\gamma = 1.5$, also this system is not able to recover. As a large gap size will eventually reached by randomness, it is justified to talk about a {\it metastable} behaviour.  In Figure \ref{fig:price} we illustrate these statements with a sample. Instead of recording the difference between ask and bid price, we have taken the distance between the 950th and the 1050th trader ordered by their opinions (i.e. the buyer with the 50th highest opinion and the seller with the 50th lowest one), because traders close to the price suffer much more fluctuations than agents with some distance. In this sense our choice represents the majority of the traders. 

In the situation when the trader groups have already a large distance from each other, the two traders in between, and also the price, perform basically a random walk. Especially, when the two traders are close to the middle in between both groups, their probability to move is almost $1$. In this case our model with a Brownian motion as driving force offers a reasonable approximation for the behaviour of the system. Thus, our results give few hope that any simulation with $\gamma > 1$ will be stable forever. But for $\gamma < 1$ the memory effect producing all the statistical facts is too small. However, as already mentioned, the model seems to be stable on a large time scale for $\gamma = 1.5$. This and also the sharp threshold between $1.5$ and $1.6$ are not understood. More research is neccessary here.

Besides these findings the three particle model introduced in this paper has its qualities on its own. As a simple model for longterm investors, this easy setting already exhibits an interesting and non-trivial longterm behaviour. As a logical next step it will be interesting to see, how the results change if we substitute the Brownian motion by a L\'evy process, which is much more realistic for price process on stock markets.

\appendix
\section{Proof of Lemma \ref{lem:limExists}}\label{sec:appendix}
We turn to the proof of Lemma \ref{lem:limExists}:
\begin{quote}
Let $S\subset [0,\infty)$ be a compact set and $\epsilon \ll \exp(-\gamma\cdot \sup S)$. Then  
\begin{equation}
\sup_{t\in S}\left|X^{\epsilon'}_t-X^\epsilon_t\right| \leq \epsilon K_S\textrm{ a.s.,}
\end{equation}
whereby $K_S$ is a finite, deterministic constant depending on $S$, and $\epsilon' < \epsilon$.
\end{quote}
Because $S$ is compact, {\it w.l.o.g.}, we may assume $S = [0,t^*]$ for some $0\leq t^*<\infty$. Remember that the jump times of $B^\epsilon$ were denoted by $\bar{\sigma}^\epsilon$ in (\ref{eq:sigmaBar}), and the time between two jumps by $\sigma^\epsilon$ in (\ref{eq:sigma}). Furthermore,
\begin{equation}\label{eq:distX2B}
\left|B^\epsilon-B^{\epsilon'}\right|<\epsilon.
\end{equation}
We denote the distance of $X^\epsilon$ to $B^\epsilon$ by
\begin{equation}
d_i := B^\epsilon_{\bar{\sigma}_i}-X^\epsilon_{\bar{\sigma}_i},
\end{equation}
and the distance to $X^{\epsilon'}$ by
\begin{equation}
\Delta_i := X^\epsilon_{\bar{\sigma}_{i}} - X^{\epsilon'}_{\bar{\sigma}_{i}},
\end{equation}
always meaning $\bar{\sigma}$ with respect to $\epsilon$. We would like to maximize $\Delta_2$, thus, we assume that $B^\epsilon$ has jumped upwards at $\bar{\sigma}_1$. Then $d_1 = \epsilon$ and $|\Delta_1| < \epsilon$. We first assume that $\Delta_1$ is positive. By definition of $\Delta$ and of $\bar{h}$ in (\ref{eq:hFuncBar}),
\begin{eqnarray}
\Delta_{2}&=&\left(X^\epsilon_{\bar{\sigma}_{2}}-X^\epsilon_{\bar{\sigma}_1}\right)-\left(X^{\epsilon'}_{\bar{\sigma}_{2}}-X^{\epsilon'}_{\bar{\sigma}_1}\right)+\left(X^\epsilon_{\bar{\sigma}_1}-X^{\epsilon'}_{\bar{\sigma}_1}\right)\\
&\stackrel{(\ref{eq:distX2B})}{\leq}&\bar{h}(\sigma_1,d_1)-\bar{h}(\sigma_1,d_1+\Delta_1+\epsilon)+\Delta_1\\
&\stackrel{(\ref{eq:hFunc})}{=}&h(\sigma_1,d_1+\Delta_1+\epsilon)-h(\sigma_1,d_1)-\epsilon\label{eq:hFormula}.
\end{eqnarray}
Remember that $h$ is {\it basically} defined as
\begin{equation}
h(t,d)=\left(\left(d+1\right)^{\gamma+1}-\left(\gamma+1\right)t\right)^{1/(\gamma+1)}-1.
\end{equation} 
As the distance will not increase anymore, once $X^\epsilon$ has hit $B^\epsilon$, we get an upper bound for $\sigma_1$:
\begin{equation}
h(\sigma_1,d_1)\geq 0\ \Leftrightarrow\ \sigma_1 \leq \frac{(d_1+1)^{\gamma+1}-1}{\gamma +1}.
\end{equation}
Because $d_1 = \epsilon$ we have $\sigma_1 \leq \epsilon$. As $d_1$, $\Delta_1$, $\epsilon$, and $\sigma_1$ are small in comparison to $1$, we apply Taylor twice to line (\ref{eq:hFormula}) and get
\begin{eqnarray}
\Delta_2&\leq&\left(1-\left(\gamma+1\right)\sigma_1\right)^{-\gamma/(\gamma+1)}\left(\Delta_1+\epsilon\right)-\epsilon\\
&=&\Delta_1+\gamma\left(\Delta_1+\epsilon\right)\sigma_1\\
&=&\Delta_1\left(1+\gamma\epsilon\right).
\end{eqnarray}
With the same argumentation we can conclude that
\begin{equation}
\Delta_{i+1}\leq \Delta_i\left(1+\gamma\epsilon\right),
\end{equation}
and thus,
\begin{eqnarray}
X^\epsilon_{t^*}-X^{\epsilon'}_{t^*}&=&\Delta_{t^*/\epsilon}\\
&\leq&\Delta_1\left(1+\gamma\epsilon\right)^{t^*/\epsilon}\\
&\to&\epsilon e^{\gamma t^*}.
\end{eqnarray}
On the other hand, if $X^{\epsilon'}>X^\epsilon$, basically the same idea applies: the distance grows the quickest, if one of the processes always stays close to its attracting process s.th. it has drift speed $1$. Now, if $X^{\epsilon'}$ increases with speed $1$ (as a worst case assumption), $\sigma^\epsilon = \epsilon$ and we end up with the same calculation as before.

It should be mentioned that our estimations are rough, as we do not consider the structure of Brownian paths, but only the worst case of all continous paths. However, uniform convergence on compact intervals is the best one can get and every improvement would only change the constant $K_S$.
\bibliographystyle{amsplain}
\bibliography{weiss-stalker}
\end{document}